\newif\ifapp
\title{A Graphical Approach to Progress for Structured Communication
  in Web Services\thanks{This work is funded in part by the Danish
    Research Agency (grant no.: 2106-080046) and the IT University of
    Copenhagen (the Jingling Genies projects). Authors are listed
    alphabetically by last name.}}
\author{
  Marco Carbone \qquad\qquad\qquad\qquad S\o ren Debois
  \institute{IT University of Copenhagen\\ Copenhagen, Denmark}
  \email{\quad \{carbonem,debois\}@itu.dk}
}
\begin{document}
\maketitle

\begin{abstract}
 We investigate a graphical
representation of session invocation interdependency in order to prove
progress for the $\pi$-calculus with sessions under the usual session
typing discipline.  We show that those processes whose associated
dependency graph
is acyclic can be brought to reduce. We call such processes {\em
  transparent processes}.
Additionally, we prove that for well-typed processes where services contain no
free names, such acyclicity is preserved by the
reduction semantics.

Our results encompass programs (processes containing
neither free nor restricted session channels) and higher-order
sessions (delegation). Furthermore, we give examples suggesting that
transparent processes constitute a large enough class of 
 processes with progress to have 
applications in modern session-based programming languages for web
services.
%
\end{abstract}

\section{Introduction}
\label{sec:introduction}
Due to fast-growing technologies and exponential growth of the Internet
and the world-wide web, computing systems and software based on
communication are becoming the norm rather than the exception. In
particular, {\em Web Services (WS)} is today a crucial ingredient in
many such systems. The W3C, the world-wide web's governing body,
defines WS as {\em ``a software system designed to support
  interoperable machine-to-machine interaction over a network''}
\cite{W3C}.
Abstractly, we think of WS as running processes, each identified by
some name, which can be repeatedly invoked by clients or other
services.  Such an invocation spawns a new thread of the service which
handles the actual interaction between service and client. This
interaction, a sequence of input-output operations, is
often referred to as {\em session}.


Recently, sessions have been the subject of intense research. Most
pertinent to the present paper, calculi for concurrency have been
equipped with typing system ensuring session safety
\cite{carbone.honda.yoshida:esop07,THK,honda.vasconcelos.kubo:language-primitives}. Such
session typing systems have in turn given rise to a host of
programming languages using session types to control concurrency,
e.g., \cite{HU07TYPE-SAFE,scribble,PucellaTov08} to cite a few.  Such
languages derive their strengths and practicality directly from our
theoretical understanding of the underlying calculi.

One class of theoretical problems with direct ramifications for
programming languages is that of ensuring {\em progress} of sessions,
that is, statically ensuring that protocols do not inadvertently {\em
  get stuck} or {\em deadlock}
\cite{sessionMariangiola,CDY07,DLY07,BCDDDY:concur2008}.  In the
present work, we investigate a graphical representation of session
invocation interdependency and exploit its properties for proving
progress. As a result, we identify a new class of processes, which we
call the {\em transparent} processes. This class advances the state of
the art in two directions: (1) it includes processes not hitherto
identified as having progress and (2) it is characterised by a simple
syntactic criterion which requires no special-purpose typing system or
inference and is computable in linear time with respect to the number
of nodes in the abstract syntax tree of a process.  We shall argue
that the class we consider has specific practical motivation and, in
combination with (2), it thus seems to be of potentially direct
practical importance.

In session-based systems such as WS, caller and callee play central
roles. Traditional calculi take a completely symmetric approach to
such roles. We claim that this approach is perhaps not fully aligned
with the practicalities of WS: the communicating parties, clients and
services, are not equal but rather \emph{inherently
  asymmetric}. Whereas clients can be anything, services must be
mutually independent. Thus, we shall assume that \emph{a service can
  never depend on previously opened communications}. This assumption
leads  us to the class of transparent processes, while
simultaneously assuring practical relevance.

Formally, we work in a $\pi$-calculus with sessions and session types
\'a l\`a \cite{honda.vasconcelos.kubo:language-primitives}.  In this
model, the above assumption becomes simply that no service contain
free session channels.  This assumption is by itself sufficient to
guarantee progress, without any extra constraints on well-typed
processes. Additionally, the progress result of the present paper
actually goes beyond such self-contained services i.e. \stable{}
processes are a larger class than just closed services. Technically,
we adopt 
a near-standard session-typing discipline similar to the one used in
\cite{carbone.honda.yoshida:esop07,CHY:concur2008}. That
 services cannot rely on already open communications is reflected
by the following only non-standard typing rule for services. By example:
\begin{align*}
  \Did{T-Serv} \quad
  &
  \Rule
  {
    \tproves
    {\Gamma,\;
      \texttt{buy}:
      \langle\alpha\rangle
    \quad} 
    {\quad\text{\framebox{$P$}}\quad}
    {\text{\framebox{$k:\alpha$}}}
  }
  {
    \tproves
    {\Gamma,\;
      \texttt{buy}:
      \langle\alpha\rangle
    \quad} 
    {\quad\text{\framebox{$\initIn {\texttt{buy}}k\pfx P$}}\quad}
    {\text{\framebox{$\emptyset$}}}
  }
\end{align*}
Above, we have framed those points relevant to the discussion. The
term $\initIn {\texttt{buy}}k\pfx P$ denotes a service named
\texttt{buy} which, upon invocation, will create some private channel
$k$ and use it in its body $P$ for exchanging messages with the
invoker. The typing rule says that the body $P$ has access only to the
new session $k$ (expressed in the premise as $k:\alpha$ where $\alpha$
describes how $k$ will be used in $P$), and not to any other previously
opened session.  Technically, this is just a small restriction to
standard session typing
\cite{honda.vasconcelos.kubo:language-primitives}, thus any process
typeable in the present system will be typeable in the standard
one. On the other hand, this restriction is practically reasonable
\cite{carbone.honda.yoshida:esop07}.



The central idea for proving the progress property is to focus on the
development of {\em session dependency graphs}, first introduced in
\cite{CHY:concur2008}. These capture the key intuition in our
transparent processes. 
Whereas much work in progress on deadlock works essentially by
ordering the sequential use of channels
(e.g. \cite{Kobayashi06,DLY07,brunimezzina2008}), the present work,
\emph{qua\/} the session dependency graph, works instead by ordering
the running threads of a system by their pairwise sharing of sessions.
Consider the following example; we call prefixes in a parallel
composition {\em threads}:
\begin{displaymath}
  \begin{array}{ll}
    \underbrace{k?(x). k'!\langle x\rangle}_{\text{1}}
    \ \pp\
    \underbrace{k!\langle 5\rangle}_{2}
    \ \pp \
    \underbrace{k'?(y)}_3
  \end{array}
\end{displaymath}
Above, thread $1$ receives a value on session channel $k$ and then
forwards it to another session $k'$. Thread $2$ sends value $5$ on
session $k$ while thread $3$ waits for a message to be sent on session
$k'$. Each of the three processes above represents a node in the
session dependency graph. Moreover, an edge between two nodes is in
the graph whenever their corresponding processes share a free session
channel.
Graphically: 
\begin{center}
  \includegraphics{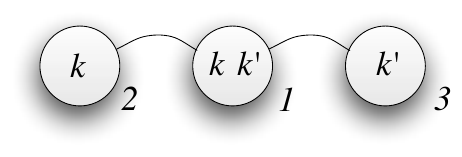}
\end{center}
The \emph{\stable{}} processes are simply those processes which have
acyclic session dependency graph at every sub-term.
Acyclicity at top-level is enough to guarantee the absence of
immediate deadlock; acyclicity at every sub-term ensures that this
deadlock--freedom is preserved by reduction. While the session
dependency graph approximates session interference and interdependency
in a powerful and intuitive way, the acyclicity of the session
dependency graphs is at heart a simple syntactic property, easily
checkable in linear time by considering the free names of a
process. 

\paragraph{Related Work.}
Progress has been investigated for a variety of calculi for web
services e.g., in
\cite{acciaiboreale2008,brunimezzina2008,caires:esop09} and, for
session types, it has spawned several lines of research. The present
paper takes as its starting points \cite{DLY07} and
\cite{CHY:concur2008}. In particular, the former gives a progress
result for a class of well-typed processes identified by a particular
typing system based on finding an ordering of channel usage, in
contrast to the session dependency graph ordering
threads. Importantly, \cite{DLY07} allows service channels to be
restricted, something that we do not presently. 
%
%
However, \stable\ processes are neither a subset nor a superset of the
processes characterized in \cite{DLY07}.  We clarify the key
differences with some examples. 
First, the process
\begin{equation}
\label{eq:2}
  \begin{array}{l}
    \initIn {\texttt{buy}}k\pfx\initOut {\texttt{ship}}{k'}\pfx {k}\triangleright
      \left\{
      \begin{array}{rl}
        \texttt{ok}:   & \inputP k{x_{\texttt{addr}}}\pfx\outputP {k'}{x_{\texttt{addr}}},\\
        \texttt{abort}:& \outputP {k'}{\texttt{null}}\pfx\inputP {k}{x_{\texttt{reason}}}
      \end{array}
    \right\}
\end{array}
\end{equation}
denotes a service $\texttt{buy}$ which, upon invocation, creates the
session channel $k$, then calls service $\texttt{ship}$ and finally
branches (with labels $\texttt{ok}$ and $\texttt{abort}$) on $k$. The
two branches differ by the order in which $k$ and $k'$ are used. This
process is transparent, and thus has progress (in a context with a
suitable invocation $\initOut {\texttt{buy}} k$ and service $\initIn
{\texttt{ship}} {k'}$).  However, because of the inconsistent
orderings on the use of channels in the two branches (communication on
$k$ then $k'$ in one, $k'$ then $k$ in the other), the typing of
\cite{DLY07} rejects this process.
Second, the process
\begin{equation}\label{eq:2asd}
  \initIn {\texttt{buy}}k\pfx \inputP k{x_{\texttt{card}}}\pfx\initIn
  {\texttt {serv}}{k'}\pfx\outputP {k'}5
  \qquad\pp\qquad
  \initOut {\texttt{buy}}k\pfx \initOut {\texttt{serv}}{k'}\pfx\inputP {k'}y\pfx\outputP k{\texttt{card}}
\end{equation}
consists of the parallel composition of a service $\texttt{buy}$ and a
client. Service $\texttt{buy}$, expects to receive credit card details
on its private channel $k$ for a payment, and then spawns some service
$\texttt{serv}$ which sends some value (5 in this case). However, the
client expects to pay after it has used the service i.e. it will
invoke immediately $\text{serv}$ after invoking
$\texttt{buy}$. Payment is done eventually after the service is used.
Notice that, if run on its own, the process will get stuck. However,
if provided with the right context, it progresses simply because it
will reduce in the presence of another suitable service
$\texttt{s}$. Again, in \cite{DLY07}, this term is rejected because of
the inconsistent order in which $\texttt{buy}$ and $\texttt{s}$ are
used. In this work, this process is \stable, and thus has progress.

Other approaches to progress for session types include
\cite{sessionMariangiola,CDY07}, which considers models featuring
synchronous/asynchronous session types for object-oriented
languages. In particular, \cite{sessionMariangiola} also exploits the
assumption that a service cannot rely on already open
communications. However, in that work, a delegation $\outputS k
{k'}\pfx P$ is well-typed only if $P$ does not contain further uses of
$k$. That is, a process can truly only ever participate in one session
at a given time.  In contrast, while we insist that services at the
outset do not reference open sessions, a service may evolve through
delegation to participate in many sessions at the same time. A case in
point is (\ref{eq:2}) above.

In \cite{BCDDDY:concur2008}, a typing system ensures progress for
multiparty session types where sessions may involve more than two
participants. Unlike our result, the aforementioned works introduce
extra typing for guaranteeing progress of programs. However, because
of asynchronous communications, limitations as the one described in
the previous paragraph are present only for processes receiving a
delegated channel.

The present paper expands in several directions the session dependency
graphs introduced in \cite{CHY:concur2008}. First, the graphs in
\cite{CHY:concur2008} address a different language dealing with
exceptions and do not handle higher-order sessions (delegation).
Second, whereas \cite{CHY:concur2008} treated only programs, the
present work takes the much larger class of \stable\ processes.

\paragraph{Contribution of the paper.} The main technical contribution
of this paper is the investigation of session interdependency graphs
and the identification of a class of well-typed processes with
progress, the \emph{\stable} processes. This is an interesting class
because:
\begin{enumerate}
\item it is potentially practically relevant: as explained above, it
  is a natural model of WS;
\item it is simply characterized by acyclicity of its session
  dependency graph, a syntactic condition, checkable in linear
  time. No special typing system is necessary;
\item it includes new processes not identified as progressing by known
  methods.
\end{enumerate}
This contribution is relevant to programming languages based on
session types, e.g., \cite{HU07TYPE-SAFE}, as it gives a simple
syntactic means for ensuring progress for protocols implemented in
such languages.

\paragraph{Outline of the presentation.} We present the $\pi$-calculus
with sessions in \S~\ref{sec:calculus} and introduce a slight
variation of session types in \S~\ref{sec:types}.  \S~\ref{sec:graphs}
introduces session dependency graphs while \S~\ref{sec:stable} defines
the class of \stable\ processes, and prove that this class is closed
under reduction. \S~\ref{sec:progress} proves that every \stable\
process progresses, and, as a corollary, that so does every
program. We conclude in \S~\ref{sec:conclusions}.
For space reasons, some (parts of) proofs have been omitted and moved
\ifapp to the appendix.  \else to the appendix of the online version
\cite{onlineversion}.  \fi


\section{A $\pi$-calculus with Service Oriented Sessions}
\label{sec:calculus}
We introduce a variant of the $\pi$-calculus with sessions
\cite{THK,honda.vasconcelos.kubo:language-primitives} which outlaws
restriction on public channels, and allows replicated behaviour only
for services.

\paragraph{Syntax.}
Let $a,b,c,x,y,z,\ldots$ range over {\em service (or public)
  channels}; $k,k',t,s,\ldots$ over {\em session (or private)
  channels}; and $e, e',\ldots$ over public channels, and arithmetic
and other first-order {\em expressions}. 
\begin{center}
  \fbox{  
    $
    \begin{array}{rllll}
      P::=&\phantom{{}\mid{}}
            \INACT 	                          &\text{(inact)}\qquad
      &\mid P\pp Q                                &\text{(par)}\\
      &\mid \new k P                              &\text{(resSess)}\qquad
      &\mid \gamma                                &\text{(prefix)}\\
      \\
      \gamma ::=
      &\phantom{\mid{}}\repInitIn{a}{k}\pfx P   &\text{(repServ)}\qquad
      &\mid \ifthenelse ePQ                       &\text{(cond)}\\
      &\mid \initIn{a}{k}\pfx P                  &\text{(serv)}\qquad
      &\mid \initOut{a}{k}\pfx P            &\text{(request)}\\
      
      &\mid \inputP{k}{x}\pfx P             &\text{(input)}\qquad
      &\mid \outputP{k}e\pfx P              &\text{(output)}\\
      &\mid \inputS{k}{k'}\pfx P            &\text{(inputS)}\qquad
      &\mid \outputS{k}{k'}\pfx P \qquad    &\text{(delegation)}\\
      &\mid \branch{k}{l} P\qquad           &\text{(branch)}\qquad
      &\mid \select{k}{l}\pfx P 	    &\text{(select)}
    \end{array}
    $
  }
\end{center}
\smallbreak \NI Above, the class of prefixes $\gamma$ includes
services (serv), replicated services (repServ), and service
invocations (request); as well as in-session communication (input,
output), receive and send of session channels (inputS, delegation),
and branching (branch, select).  The other operators are standard. The
free session (service) channels of a process $P$, denoted by $\fsc P$
($\fv P$), are defined as usual. For the sake of simplicity, we have
removed recursion. 
We conjecture that our results can be easily extended.


\begin{example}[Buyer-Seller protocol]\rm\label{ex1}
  We recall a variant of the Buyer-Seller protocol from
  \cite{carbone.honda.yoshida:esop07} where a buyer invokes a service
  \texttt{buy} at a seller for a quote about some product. In case of
  acceptance by the buyer, the seller will place the order by invoking
  a service \texttt{ship} at a shipper and forward credit card
  details. Finally, the shipper will send directly a confirmation to
  the buyer. Such a protocol can be described by the process
  $P_{\text{Buyer}}\ \pp\ P_{\text{Seller}}\ \pp\ P_{\text{Shipper}}$
  such that:
  \begin{displaymath}
    \begin{array}{rl}

      P_{\text{Buyer}}\quad \stackrel{\text{def}}=\quad&
      \initOut{\texttt{buy}}{k}\pfx\ \inputP
      k{x_{\texttt{quote}}}\pfx\ \ifthenelse{\ x_{\texttt{quote}}\leq
        100\ }{\ \select k{\texttt{ok}}\pfx\inputP
        k{x_{\texttt{conf}}}\pfx\INACT\ }{\ \select
        k{\texttt{stop}}\pfx\INACT\ }\\[2mm]

      P_{\text{Seller}}\quad \stackrel{\text{def}}=\quad&
      \repInitIn{\texttt{buy}}{k}\pfx\outputP k{\texttt{quote}}\pfx\ k
      \triangleright\{\
      \texttt{ok}:\initOut{\texttt{ship}}{k'}\pfx\outputS{k'}{k}\pfx\INACT,\quad
      \texttt{stop}:\INACT\
      \}\\[2mm]

      P_{\text{Shipper}}\quad \stackrel{\text{def}}=\quad&
      \repInitIn{\texttt{ship}}{k'}\pfx\inputS{k'}{k}\pfx\
      \outputP{k}{\texttt{conf}}\pfx\INACT
    \end{array}
  \end{displaymath}
\end{example}

\paragraph{Structural Congruence and Reduction Semantics.}
The structural congruence $\equiv$ is standard and is defined as the
minimal relation satisfying the following rules:
\begin{center}
  \fbox{
    $
    \begin{array}{c}
      (i)\ P\pp Q\equiv Q\pp P \qquad\qquad (ii)\ P\pp(Q\pp R)\equiv
      (P\pp Q)\pp R\qquad\qquad (iii)\ P\pp\INACT \equiv P
      \\[2mm]
      (iv)\ P\equiv Q\quad(\text{if } P=_\alpha Q)
      \qquad\quad
      (v)\ P\pp \new k Q\equiv \new k(P\pp Q)\quad(k\not\in\fsc
      P)
      \qquad (vi)\ (\nu k) \INACT \equiv \INACT
    \end{array}
    $
  }
\end{center}
\smallskip The standard reduction semantics $\rightarrow$
\cite{honda.vasconcelos.kubo:language-primitives} is reported in
Table~\ref{reductionsemantics} where $e\Downarrow v$, taking
expressions to some values, is unspecified.  Note that we have adopted
the original \Did{Del} rule
\cite{honda.vasconcelos.kubo:language-primitives} which requires the
receiving side to ``guess'' what is being delegated (as in internal
$\pi$-calculus \cite{internalmobility}). As a consequence, process
$\inputS k{k''}\pfx P\pp\outputS k{k'}\pfx Q$ fails to reduce if $k'$
is free in $P$, as we cannot rename $k''$ to $k'$
\cite{sessiontypesRevisited}. We shall see that such process violates
transparency (see Remark~\ref{selfdelegation}).
\begin{table}[t]
  {\small
    \begin{center}
      \fbox{
        $
        \begin{array}{rl@{\hspace{0cm}}rl}
          \Did{RInit} \
          &
          \repInitIn a k\pfx P \pp \initOut a k\pfx Q
          \to
          \repInitIn a k\pfx P \pp \new k (P\pp Q)
          &
          \Did{Init} \
          &
          \initIn a k\pfx P \pp \initOut a k\pfx Q
          \to
          \new k (P\pp Q)
          \\[1.5mm]
          \Did{Com}\
          &
          \inputP{k}{x}\pfx P\pp \outputP{k}e\pfx Q
          \to
          P[v/x]\pp Q\qquad(e\Downarrow v)
          &
          \Did{Del}\
          &
          \inputS{k}{k'}\pfx P\pp \outputS{k}{k'}\pfx Q
          \to
          P\pp Q
          \\[1.5mm]
          \Did{Sel}\
          &
          \branch{k}{l} P\pp \select{k}{l_j}\pfx Q
          \to
          P_j\pp Q\ \ \ \quad(j\in I)\qquad\ \
          &
          \Did{Par}\
          &
          P
          \to
          P'
          \quad\Rightarrow\quad
          P\pp Q
          \to
          P'\pp Q
          \\[1.5mm]
          \Did{Res
          }\
          &
          P
          \to
          P'
          \quad\Rightarrow\quad
          \new k P
          \to
          \new k P'
          &
          \Did{Str}\
          &
          P\equiv Q,\
          Q\to Q',\
          Q'\equiv P'
          \ \Rightarrow\
          P
          \to
          P'
          \\[1.5mm]
          \Did{IfT}\
          &
          \ifthenelse ePQ
          \to
          P
          \qquad(e\Downarrow \true)
          &
          \Did{IfF}\
          &
          \ifthenelse ePQ
          \to
          Q
          \qquad(e\Downarrow \false)
        \end{array}
        $
      }
    \end{center}
  }
\caption{Reduction Semantics}
\label{reductionsemantics}
\end{table}

We conclude this section with two auxiliary notions. First, the notion
of program, i.e., a process containing no free session channels and no
occurrences of restricted session channels. 
\begin{definition}[Program]
  A process $P$ is a {\em program} whenever $\fsc P=\emptyset$ and
  there exists $P'\equiv P$ s.t.~$P'$ has no syntactic sub-term $\new
  k Q$.
\end{definition}
\NI Second, sub-processes, i.e., sub-terms of a process:
\begin{definition}[Sub-Process]
  A process $Q$ is a \emph{sub-process} of $P$ iff $Q$ is a sub-term
  of some $P'\equiv P$.  
\end{definition}


\section{Session Typing}
\label{sec:types}
\paragraph{Syntax.}
Session types abstract the way a single session channel is used within
a single session. The structure of a session is represented by a type,
which is then used as a basis for validating protocols through an
associated type discipline.  Their syntax is given by the following
grammar:
\begin{center}
  \fbox{  
    $
    \begin{array}{rlrlrlrlrl}
      \alpha&\ ::=\ 
      \inputT\theta\pfx\alpha
      \ \mid\ 
      \outputT\theta\pfx\alpha
      \ \mid\ 
      \branchT l\alpha
      \ \mid\
      \selectT l\alpha
      \ \mid\
      \inactT
      \\[1mm]
      \theta&\ ::=\ 
      S
      \ \mid\ 
      \alpha
      \qquad\qquad\qquad
      S\ ::=\ 
      \mathsf{basic}
      \ \mid\ 
      \langle\alpha\rangle
    \end{array}
    $
  }
\end{center}
\smallskip 

\NI Here, $\inputT\theta\pfx\alpha$ and $\outputT\theta\pfx\alpha$
denote in-session input and output followed by the communications in
$\alpha$. The type $\theta$ abstracts what is communicated: a basic
value ($\mathsf{basic}$ denotes basic types, e.g., $\mathsf{int}$ or
$\mathsf{bool}$), a service channel of type $\langle\alpha\rangle$, or
a session channel of type $\alpha$. Finally, $\branchT l\alpha$ and
$\selectT l\alpha$ denote branching and selection types, and $\inactT$
is the inactive session. The {\em dual} of $\alpha$, written
$\ol\alpha$, is defined as \smallskip
\begin{center}
  \fbox{
    $
    \begin{array}{rclrclrcl}
      \ol{\inputT\theta\pfx\alpha} & = & \outputT\theta\pfx\ol{\alpha}
      &
      \ol{\outputT\theta\pfx\alpha}& = & \inputT\theta\pfx\ol{\alpha}
      \\[1mm]
      \ol{\branchT l\alpha}        & = & \selectT l{\ol{\alpha}}
      &\qquad
      \ol{\selectT l\alpha}        & = & \branchT l{\ol{\alpha}}
      &\qquad
      \ol\inactT                   & = & \inactT
    \end{array}
    $
  }
\end{center}

\paragraph{Environments, Judgements and Typing Rules.}
We define two typing environments, namely the {\em service} and the
{\em session} typing.
\begin{center}
  \fbox{
    $
    \begin{array}{rlrlrlrl}
      \textit{(Service Typing)}\qquad\quad \Gamma &\ ::=\quad
      \Gamma,\;a:S\quad \mid\quad \emptyset
      \\[1mm]
      \textit{(Session Typing)}\qquad\quad \Delta &\ ::=\quad
      \Delta\;\cdot\;k:\alpha \quad\mid\quad \Delta\;\cdot\;k:\perp \quad
      \mid\quad \emptyset
    \end{array}
    $
  }
\end{center}
The service typing fixes usage of service channels, whereas session
typing fixes usage of session channels.  In $\Delta$, a session
channel $k$ may be assigned to $\perp$ rather than a session type
$\alpha$. This is to note that the two sides of session $k$ have
already been found in a sub-term (therefore $k$ cannot be used
further). When convenient, we treat both environments as functions
mapping channels to their types.

Judgements have the form $\tproves\Gamma P\Delta$ and are defined in
Table~\ref{table:typingrules}.
\begin{table}[t]
  \begin{center}\small
    \fbox{
      $
      \begin{array}{rl@{\hspace{0cm}}rl}
        \Did{T-Serv} \
        &
        \Rule
        {
          \tproves{\Gamma,\;a:\langle\alpha\rangle}{P}{k:\alpha}
        }
        {
          \tproves{\Gamma,\;a:\langle\alpha\rangle}{\initIn ak\pfx P}{\emptyset}
        }                                        
        &
        \Did{T-Req} \
        &
        \Rule
        {
          \tproves{\Gamma,\;a:\langle\alpha\rangle}{P}{\Delta\cdot
            k:\overline\alpha}
        }
        {
          \tproves{\Gamma,\;a:\langle\alpha\rangle}{\initOut ak\pfx P}{\Delta}
        }
       \\\\
        \Did{T-In} \
        &
        \Rule
        {
          \tproves{\Gamma,\;x:S}{P}{\Delta\cdot k:\alpha}
        }
        {
          \tproves{\Gamma}{\inputP kx\pfx P}{\Delta\cdot k:\;\inputT S\pfx\alpha}
        }                                        
        &
        \Did{T-Out} \
        &
        \Rule
        {
          \tproves{\Gamma}{P}{\Delta\cdot k:\alpha}
          \qquad
          \Gamma\vdash e:S
        }
        {
          \tproves{\Gamma}{\outputP ke\pfx P}{\Delta\cdot k:\;\outputT S\pfx\alpha}
        }
        \\\\
        \Did{T-InS} \
        &
        \Rule
        {
          \tproves{\Gamma}{P}{\Delta\cdot k:\alpha\cdot k':\beta}
        }
        {
          \tproves{\Gamma}{\inputS k{k'}\pfx P}{\Delta\cdot
            k:\;\inputT\beta\pfx\alpha}
        }                                        
        &
        \Did{T-Del} \
        &
        \Rule
        {
          \tproves{\Gamma}{P}{\Delta\cdot k:\alpha}
        }
        {
          \tproves{\Gamma}{\outputS k{k'}\pfx P}{\Delta\cdot k:\;\outputT\beta\pfx\alpha\cdot k':\beta}
        }                                        
        \\\\
        \Did{T-Bra} \
        &
        \Rule
        {
          \tproves{\Gamma}{P_i}{\Delta\cdot k:\alpha_i}
        }
        {
          \tproves{\Gamma}{\branch klP}{\Delta\cdot
            k:\;\branchT l\alpha}
        }\quad
        &
        \Did{T-Sel}\
        &
        \Rule
        {
          \tproves{\Gamma}{P}{\Delta\cdot k:\alpha_j}
        }
        {
          \tproves{\Gamma}{\select k{l_j}\pfx P}{\Delta\cdot
            k:\selectT l\alpha}
        }
        \\\\
        \Did{T-Par} \
        &
        \Rule
        {
          \tproves{\Gamma}{P_i}{\Delta_i}
          \qquad
          \Delta_1\asymp\Delta_2
        }
        {
          \tproves{\Gamma}{P_1\pp P_2}{\Delta_1\odot\Delta_2}
        }                                        
        &
        \Did{T-Inact} \
        &
        \Rule
        {
          \alpha_i=\inactT
        }
        {
          \tproves{\Gamma}{\INACT}{k_1:\alpha_1\cdot\ldots\cdot k_n:\alpha_n}
        }                                        
        \\\\
        \Did{T-RServ} \
        &
        \Rule
        {
          \tproves{\Gamma,\;a:\langle\alpha\rangle}{P}{k:\alpha}
        }
        {
          \tproves{\Gamma,\;a:\langle\alpha\rangle}{\repInitIn ak\pfx P}{\emptyset}
        }                                        
        &
        \Did{T-Res
        } \
        &
        \Rule
        {
          \tproves\Gamma P{\Delta\cdot k:\;\perp}
        }
        {
          \tproves{\Gamma}{\new kP}{\Delta}
        }
        \\\\
        \Did{T-Cond} \
        &
        \Rule
        {
          \Gamma\vdash e:\textsf{bool}
          \quad
          \tproves\Gamma {P_i}\Delta
        }
        {
          \tproves{\Gamma}{\ifthenelse e{P_1}{P_2}}{\Delta}
        }
        &
        \Did{T-Bot} \
        &
        \Rule
        {
          \tproves\Gamma {P}\Delta\cdot k:\inactT
        }
        {
          \tproves\Gamma {P}\Delta\cdot k:\perp
        }
      \end{array}
      $
    }
  \end{center}
  \caption{Typing Rules for Session Types}
  \label{table:typingrules}
\end{table}
The rules are all standard
\cite{honda.vasconcelos.kubo:language-primitives} except from
\Did{T-Serv}. As mentioned in the introduction, the fresh channel $k$
is the \emph{only} session channel present in the session environment
for the subprocess $P$. \Did{T-Bot} is necessary in order to guarantee
subject congruence \cite{sessiontypesRevisited}. We remind the reader
that, in \Did{T-Par}, $\Delta_1\asymp\Delta_2$ (duality check) holds
whenever $\Delta_1(k)=\overline{\Delta_2(k)}$ for every
$k\in\textsf{dom}(\Delta_1)\cap\textsf{dom}(\Delta_2)$.  Moreover,
$\Delta_1\odot\Delta_2$ (assign $\perp$ when both sides of a session
have been found) is defined as
$\Delta_1+\Delta_2+\bigcup_{k\in\textsf{dom}(\Delta_1)\cap\textsf{dom}(\Delta_2)}k:\perp$.


\begin{example}\rm\label{asdfjahsf}
  The Buyer-Seller protocol from Example~\ref{ex1} is clearly well
  typed according to the rules in Table~\ref{table:typingrules}. In
  fact, the judgement $\tproves\Gamma{P_{\text{Buyer}}\ \pp\
    P_{\text{Seller}}\ \pp\ P_{\text{Shipper}}}{\emptyset}$ holds
  whenever $\Gamma$ contains:
  \begin{displaymath}
    \begin{array}{rl}
      \texttt{buy}\ : \ &\outputT{\texttt{int}}\pfx\&\{\
      \texttt{ok}:\outputT{\texttt{string}}\pfx\inactT,\quad
      \texttt{stop}:\inactT\ \}\\
      \texttt{ship}\ :\ &\inputT{\outputT{\texttt{string}}}\pfx\inactT
    \end{array}
  \end{displaymath}
\end{example}

Clearly, a process well-typed according to
Table~\ref{table:typingrules} is also well-typed according to the
standard typing \cite{honda.vasconcelos.kubo:language-primitives}.  As
a consequence, it is straightforward to obtain the standard subject
congruence/reduction results.

\begin{theorem}\label{thm:std}
  Let $\tproves\Gamma P\Delta$. Then,
  (1) $P\equiv Q$ implies $\tproves\Gamma Q\Delta$; and (2) $P\to P'$
  implies $\tproves\Gamma P\Delta$.
\end{theorem}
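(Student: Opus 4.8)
The plan is to re-run the standard proof of subject congruence and subject reduction, checking at each step that the single non-standard feature of our system---the restriction in \Did{T-Serv} and \Did{T-RServ} that a service body be typed with \emph{only} its freshly opened session channel---is respected. Since every derivation in Table~\ref{table:typingrules} is also a derivation in the standard system, the real content here is that the standard, constructive subject-reduction argument, when fed a derivation that lies in our sub-system, again produces a derivation lying in our sub-system. (I read part~(2) as the subject-reduction assertion $\tproves\Gamma{P'}\Delta$.) Before the two inductions I would record the usual auxiliaries: a substitution lemma stating that $\tproves{\Gamma,\;x:S}{P}{\Delta}$ and $\Gamma\vdash v:S$ imply $\tproves{\Gamma}{P[v/x]}{\Delta}$; the elementary algebra of session environments, namely that $\asymp$ is symmetric and $\odot$ is commutative and associative with $\emptyset$ as unit; and the free-channel fact $\fsc P\subseteq\dom(\Delta)$ whenever $\tproves\Gamma P\Delta$, so that a restriction may be relocated across a parallel component not using its channel.

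For part~(1) I would induct on the derivation of $P\equiv Q$; as $\equiv$ is the least congruence generated by the axioms, it suffices to treat each of $(i)$--$(vi)$ in both directions together with closure under the process constructors. Commutativity and associativity of \pp\ follow from the corresponding properties of $\asymp$ and $\odot$; the unit law $P\pp\INACT\equiv P$ from $\emptyset$ being a unit for $\odot$ together with \Did{T-Inact}; $\alpha$-conversion is routine; and scope extrusion $P\pp\new k Q\equiv\new k(P\pp Q)$ with $k\notin\fsc P$ uses the free-channel fact to move the \Did{T-Res} application past the \Did{T-Par} one. None of these rewrites touches a service prefix, so the \Did{T-Serv}/\Did{T-RServ} restriction is trivially maintained.

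For part~(2) I would induct on the derivation of $P\to P'$ and case-split on the last rule. The inductive cases \Did{Par} and \Did{Res} follow from the induction hypothesis and the environment algebra, and \Did{Str} from part~(1). The base cases \Did{IfT}/\Did{IfF} and \Did{Sel} are immediate, \Did{Com} uses the substitution lemma, and \Did{Del} is handled by combining the premises of \Did{T-InS} and \Did{T-Del} and checking the duality and $\odot$ bookkeeping on $k$ (the dynamic side-condition that blocks reduction when $k'$ is free plays no typing role). The only genuinely new cases are the service invocations \Did{Init} and \Did{RInit}: the redex $\initIn ak\pfx P\pp\initOut ak\pfx Q$ types with $\emptyset\odot\Delta=\Delta$, where the premise of \Did{T-Serv} gives $\tproves{\Gamma,\;a:\langle\alpha\rangle}{P}{k:\alpha}$ and that of \Did{T-Req} gives $\tproves{\Gamma,\;a:\langle\alpha\rangle}{Q}{\Delta\cdot k:\ol\alpha}$; since $\alpha=\ol{\ol\alpha}$ the duality check $k:\alpha\asymp k:\ol\alpha$ holds, so \Did{T-Par} and then \Did{T-Res} yield $\tproves{\Gamma}{\new k(P\pp Q)}{\Delta}$, as required (the replicated variant additionally keeps the unchanged $\repInitIn ak\pfx P$ component).

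I expect the main obstacle---and really the only point that is not entirely textbook---to be verifying that this last step does not take us outside our sub-system. The service body $P$ is reintroduced into the reduct exactly as it was typed in the redex, that is, with session environment $k:\alpha$ containing only the fresh channel; hence the \Did{T-Serv}/\Did{T-RServ} restriction is never violated by reduction, since reduction only relocates already-restricted bodies rather than creating new ones. Making this invariant precise, and checking that it survives the $\odot$/$\asymp$ merging in \Did{T-Par} and the bottom bookkeeping of \Did{T-Res}, is the crux; everything else is the standard argument of \cite{honda.vasconcelos.kubo:language-primitives}.
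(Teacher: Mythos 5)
Your proposal is correct, but it does more work than the paper, and along a subtly different route. The paper offers no induction at all: it observes that every derivation in Table~\ref{table:typingrules} is a derivation in the standard system of \cite{honda.vasconcelos.kubo:language-primitives} and declares the subject congruence/reduction results ``straightforward'' consequences of the standard ones. You correctly spot that this bare embedding is not by itself conclusive for the theorem as stated: standard subject reduction only yields typability of the reduct in the \emph{super}-system, whereas the judgement $\tproves\Gamma{P'}\Delta$ in the theorem is the restricted one. Your reconstruction---re-running the standard induction and checking the invariant that service bodies re-enter the reduct exactly as typed under \Did{T-Serv}/\Did{T-RServ} with session environment $k:\alpha$ alone, so that reduction never manufactures a service prefix violating the restriction---is precisely the missing content, and your case analysis (substitution lemma for \Did{Com}, environment algebra for \Did{Par}/\Did{Res}, duality plus the $k:\bot$ bookkeeping for \Did{Init}/\Did{RInit}/\Did{Del}, part~(1) for \Did{Str}) is the standard proof correctly specialized. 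You also rightly read part~(2) as the obvious typo for $\tproves\Gamma{P'}\Delta$. One small point you should make explicit rather than leave implicit: in part~(1), the directions $P\pp\INACT\equiv P$ and $(\nu k)\,\INACT\equiv\INACT$ are exactly where \Did{T-Bot} is needed (to mediate between $k:\inactT$ and $k:\bot$ introduced by $\odot$ and required by \Did{T-Res}); the paper itself flags this, citing \cite{sessiontypesRevisited}, and your phrase about ``bottom bookkeeping'' gestures at it without pinning it to the congruence axioms where it actually bites. In short: the paper buys brevity by leaning on the literature; your version buys a self-contained argument and, more importantly, closes the logical gap in the embedding shortcut.
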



\section{Session Dependency Graphs}
\label{sec:graphs}
In this  section, we recall, generalize, and develop \emph{session
  dependency graphs}, introduced in \cite{CHY:concur2008} for a
language dealing with exceptions and without delegation. In subsequent
sections, we shall use them to establish progress for a class of
well-typed processes.

Informally, given a restriction-free process
$P\equiv\gamma_1\pp\cdots\pp \gamma_n$ where each $\gamma_i$ is called
a ``thread'', $P$'s session dependency graph has a node for each
thread $\gamma_i$, and edges between threads $\gamma_i,\gamma_j$ if the two
share a free session channel.
Observe that a thread $\gamma_i$ can be blocked on a session channel
$k$, waiting for $\gamma_j$ to synchronize on $k$, \emph{only if} the
two have an edge between them. It follows that a well-typed process is
deadlocked only if its session dependency graph contains a cycle (we
prove this result formally in
Theorem~\ref{thm:progressstableprocesses}). Notice how this approach
to deadlock detection differs from the focus on the ordering of the
channels usage found frequently in the literature, e.g., in
\cite{Kobayashi06,DLY07}. 

\begin{definition}[Session Dependency
  Graph]\label{def:sessiondependencygraph}
  Let $P$ be well-typed process. The \emph{session dependency graph}
  $\graph P=(\nodes P, \edges P, \labels P)$ is the labelled
  unoriented graph\footnote{Technically, a graph is here a 5-tuple
    $(N, E, L, \dom, \cod)$, where the latter two are maps
    $\dom,\cod:E\to N$ taking edges to their domain and co-domain
    respectively. We elide these maps; their values shall always be
    clear from the context.  } with nodes $\nodes P$, edges $\edges P$
  and labels $\labels P:\nodes P \to \mathcal P(\sessionnames)$
  defined inductively on the term structure of $P$ as follows:
  \smallskip
  \begin{center}
    \fbox{ $
      \begin{array}{rclllll}
        \graph \INACT &=&\ (\emptyset,\quad \emptyset,\quad \emptyset)\\[1.5mm]
        \graph {\new k P} &=&\ (\ \nodes P,\quad \edges
        P,\quad \labels P\setminus\{k\}\ )\\[1.5mm]
        \graph{\gamma} &=&\
        (\ \bullet,\quad \emptyset,\;
        [\bullet\mapsto\fsc{\gamma}]\ )\\[2mm]
        \graph {P\pp Q} &=&\ (\
        \;\nodes P  + \nodes Q,\ \quad 
        \edges P + \edges Q +
        \displaystyle\Sigma_{
          {\tiny
            \begin{array}{l}
              p\in\nodes P\\[0.01mm]
              q\in\nodes Q\\[0.1mm]
              k\in \labels P (p) \cap\labels Q (q)
            \end{array}
          }
        }
        \!\!\!\!\!\!\!\!\!\!\!\!\!\!(p,q),\
        \quad
        \labels P + \labels Q
        \ )
      \end{array}
      $
    }
  \end{center}
  \smallskip
  Here, we take $(\labels P \setminus \{k\})(p)=\labels
  P(p)\setminus\{k\}$.
\end{definition} 
The definition of graph is insensitive to structural congruence:
\begin{lemma}
  \label{lem:56}
  Let $P,Q$ be well-typed processes s.t.~$P\equiv Q$. Then $\graph P\simeq\graph Q$.
\end{lemma}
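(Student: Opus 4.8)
The plan is to induct on the derivation of $P \equiv Q$, using that $\equiv$ is the least congruence closed under the axioms $(i)$–$(vi)$. Accordingly there is one case per axiom, together with the cases for reflexivity, symmetry, transitivity, and closure under each term constructor. The reflexive, symmetric and transitive cases are handled purely by $\simeq$ being an equivalence on labelled graphs (identity, inverse, composition of isomorphisms); definedness of $\graph{\cdot}$ is no issue, since well-typedness is preserved along $\equiv$ by Theorem~\ref{thm:std}(1) and subterms of well-typed processes are again well-typed (e.g.\ by \Did{T-Par}). As a preliminary I would record the standard fact that $\equiv$ preserves free session channels, i.e.\ $P \equiv Q$ implies $\fsc P = \fsc Q$; this is a routine induction and is exactly what the prefixing-congruence case needs.

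For the base cases I would unfold $\graph{\cdot}$ on both sides and exhibit the isomorphism. Axioms $(i)$ and $(ii)$ are witnessed by the evident bijections of the disjoint unions of node sets; the content is checking that the cross-edge families match. For $(i)$ this holds because the edge condition $k \in \labels P(p) \cap \labels Q(q)$ is symmetric in $P$ and $Q$; for $(ii)$ because $\labels{Q \pp R} = \labels Q + \labels R$, so the cross-edges of $P$ against $Q \pp R$ split as those of $P$ against $Q$ plus those of $P$ against $R$, and both sides end up with the same multiset $\edges P + \edges Q + \edges R$ together with the three cross-families. Axioms $(iii)$ and $(vi)$ are immediate, as $\graph \INACT$ is the empty graph and so contributes no nodes, edges, or labels (and restriction of it deletes nothing). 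For the $\alpha$-equivalence axiom $(iv)$ I would use invariance of $\fsc{\cdot}$ under renaming of bound names: each thread keeps its label, and each restriction node deletes the (consistently renamed) bound channel, so the graphs agree.

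The delicate case, and the main obstacle, is the scope-extrusion axiom $(v)$, namely $P \pp \new k Q \equiv \new k(P \pp Q)$ under the side condition $k \notin \fsc P$. The two sides differ in the order in which edges are created and labels erased: on the right one first forms $\graph{P \pp Q}$, adding a cross-edge for \emph{each} session channel shared between a $P$-node and a $Q$-node---possibly including $k$---and only afterwards deletes $k$ from the labels, which retains any edges $k$ justified; on the left one first deletes $k$ from $Q$'s labels and only then composes, so no cross-edge can ever be justified by $k$. The side condition reconciles the two precisely: since $k \notin \labels P(p)$ for every $P$-node $p$, we have $k \notin \labels P(p) \cap \labels Q(q)$ anyway, so no cross-edge is justified by $k$ on either side and the (multiset) edge families coincide; the same side condition gives $(\labels P + \labels Q) \setminus \{k\} = \labels P + (\labels Q \setminus \{k\})$, so the label functions agree too. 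The care needed here is that edges form a multiset---one per shared channel---so the matching must be an equality of indexed families, not merely of edge sets.

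Finally, for the closure-under-contexts cases I would observe that $\simeq$ is label-preserving and that $\graph{\cdot}$ is functorial with respect to label-preserving isomorphisms. Concretely, if $\phi : \graph P \simeq \graph{Q}$ and $\psi$ is a label-preserving isomorphism on the other component, then $\phi + \psi$ witnesses the isomorphism of the composed graphs, since the cross-edges are determined solely by the labels that $\phi$ and $\psi$ preserve; and the very same $\phi$ witnesses $\graph{\new k P} \simeq \graph{\new k P'}$ when $\graph P \simeq \graph{P'}$, since restriction merely deletes $k$ uniformly from all labels. For a prefixing context (any constructor forming a thread $\gamma$, such as $\initIn a k \pfx [\cdot]$ or $\ifthenelse e {[\cdot]} Q$) the filled term collapses to a single labelled node, so $\fsc P = \fsc P'$ from the preliminary makes the two singleton graphs literally equal.
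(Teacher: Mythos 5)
Your proposal is correct and takes essentially the same approach as the paper, whose proof proceeds by cases on the definition of $\equiv$ (with details deferred to the appendix of the online version). Your case analysis per axiom plus the closure cases---in particular the treatment of scope extrusion $(v)$ via the side condition $k\not\in\fsc{P}$, and the observation that the congruence cases need label-preserving isomorphisms together with invariance of $\fsc{\cdot}$ under $\equiv$---supplies exactly the kind of detail the paper omits.
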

\begin{proof}
  By cases on the definition of $\equiv$. 
\ifapp
  See Appendix~\ref{app:lem:56} for details.
\fi
\end{proof}

\begin{example}\rm
  Consider the following process.
  \begin{equation}
    \label{eq:1}
    \inputP {k'} x\pfx \outputP {k''} x \pp
    \inputP {k''} x\pfx \outputP {k'} x
  \end{equation}
  The session dependency graph of this process has two nodes, both
  labelled by $\{k',k''\}$, and thus has two edges between those two
  nodes:
  \begin{center}
    \includegraphics{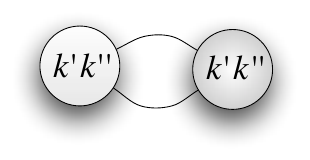}
  \end{center}
  Thus, the graph has a cycle.  Consider the same process, only
  restricting $k',k''$.
  \begin{equation}
    \label{eq:3}
    \new {k'k''}(\inputP {k'} x\pfx \outputP {k''} x \pp
    \inputP {k''} x\pfx \outputP {k'} x)
  \end{equation}
  Its graph has the same node and edges, but both nodes are now
  labelled by $\emptyset$. 
  \begin{center}
    \includegraphics{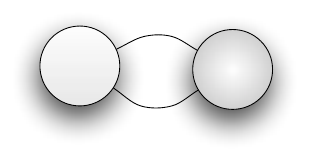}
  \end{center}
  \vspace{-5mm} Obviously, also this graph contains a cycle. Now,
  prefix the process with a service.
  \begin{equation}
    \label{eq:4}
    \initIn a k \pfx
    \new {k'k''}(\inputP {k'} x\pfx \outputP {k''} x \pp
    \inputP {k''} x\pfx \outputP {k'} x)
  \end{equation}
  This process has a graph with a single, unlabelled node and no
  edges which, obviously, contains no cycle:
  \begin{center}
    \includegraphics{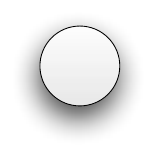}
  \end{center}
\end{example}

\begin{example}\rm
  The graph of $P_{\text{Buyer}}\ \pp\ P_{\text{Seller}}\ \pp\
  P_{\text{Shipper}}$ in Example~\ref{ex1} has clearly three
  unlabelled nodes. However, after one step of reduction, we have: 
\begin{equation}\label{theworldisevil}
\new k\Big(\
      \inputP k{x_{\texttt{quote}}}\pfx\ Q_1
      \ \pp\
      \outputP k{\texttt{quote}}\pfx\ Q_2
      \ \Big)
      \ \pp\
      P_{\text{Shipper}}
\end{equation}
where $Q_1$ and $Q_2$ are the remainders of buyer and seller's
processes. The session interdependency graph of the process above
becomes (restriction removes labels):
\begin{center}
  \includegraphics{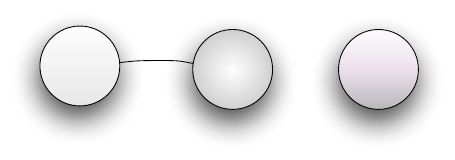}
\end{center}
If we further reduce (\ref{theworldisevil}) until also shipper is
invoked, we obtain the process:
\begin{equation*}
  \new {k,k'}\Big(\
  \inputP k{x_{\texttt{conf}}}\pfx\INACT
  \ \ \pp\ \
  \outputS{k'}{k}\pfx\INACT
  \ \ \pp\ \
  \inputS{k'}{k}\pfx \outputP{k}{\texttt{conf}}\pfx\INACT
  \ \Big)
\end{equation*}
whose graph is:\qquad\qquad\quad\qquad
\begin{tabular}{c}
  \includegraphics{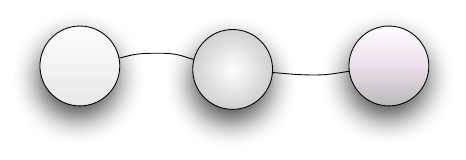}
\end{tabular}
\end{example}

\paragraph{Notation.}
To ease the presentation, we shall frequently say, e.g., ``$p$ is a
node of $\graph{P}$'' and ``$p$ is labelled by $k$ in $\graph{P}$''
rather than the less readable ``$p\in\nodes P$'' and ``$k\in\labels
P(p)$''.  Moreover, when no confusion may arise, we drop the ``in/of
$\graph P$'', saying simply, e.g., ``$p$ is labelled $k$''.


\smallskip

We conclude this section by noting two important facts about session
dependency graphs. First, the combinatorics of typing.
\begin{lemma}\label{lem:564}
  Let $\tproves{\Gamma}{P}{k_1:\perp\cdot\ldots\cdot k_n:\perp\cdot
    k_1':\alpha_1\cdot\ldots\cdot k_m':\alpha_m}$. Then (1) for each
  $i\leq m$ at most one node $p$ of $\graph P$ is labelled $k_i'$, and
  (2) when $R$ contains no top-level restrictions $\graph{P}$ has at
  most $n$ edges.
\end{lemma}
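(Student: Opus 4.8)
The plan is to establish (1) by induction on the typing derivation, and then to obtain (2) from (1) together with a bound on the number of threads that may share a $\perp$-typed channel. Throughout I would use the standard fact that $\fsc P\subseteq\dom\Delta$ whenever $\tproves\Gamma P\Delta$ (every free session channel is typed), which follows by a routine induction on the derivation, together with the observation that a node $p$ of $\graph P$ carries the channel $k$ exactly when $k$ is free in the corresponding thread; restrictions only delete labels and never relabel, so they cannot affect which nodes carry a channel that survives in $\Delta$.

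For (1), I would induct on $\tproves\Gamma P\Delta$ with $\Delta(k_i')=\alpha_i\neq\perp$. A single prefix and $\INACT$ are immediate, since $\graph\gamma$ has one node and $\graph\INACT$ none. For $\Did{T-Res}$ the restricted channel differs from $k_i'$ (as $k_i'\in\dom\Delta$), the node set is unchanged, and the nodes carrying $k_i'$ are unchanged, so the induction hypothesis applies verbatim; $\Did{T-Bot}$ and $\Did{T-Inact}$ touch only $\perp$/$\inactT$ entries and leave $k_i'$ alone. The crucial case is $\Did{T-Par}$, $P=P_1\pp P_2$ with $\Delta=\Delta_1\odot\Delta_2$. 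Since $\Delta(k_i')\neq\perp$ and $\odot$ sends every channel of $\dom\Delta_1\cap\dom\Delta_2$ to $\perp$, the channel $k_i'$ lies in the domain of at most one $\Delta_j$; by $\fsc{P_{3-j}}\subseteq\dom\Delta_{3-j}$ it is then free in no thread of $P_{3-j}$, so no node of $\graph{P_{3-j}}$ carries it. The induction hypothesis bounds the nodes of $\graph{P_j}$ carrying $k_i'$ by one, and since $\nodes{P_1\pp P_2}=\nodes{P_1}+\nodes{P_2}$, at most one node of $\graph P$ carries $k_i'$.

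For (2), assume $P$ has no top-level restriction, so $P\equiv\gamma_1\pp\cdots\pp\gamma_\ell$ with each $\gamma_i$ a prefix; by Lemma~\ref{lem:56} I may compute on this form. The graph then has nodes $\gamma_1,\dots,\gamma_\ell$ with $\gamma_i$ carrying $\fsc{\gamma_i}$, and unwinding Definition~\ref{def:sessiondependencygraph} the number of edges equals $\sum_k\binom{d_k}{2}$, where $d_k$ is the number of threads having $k$ free (this total is independent of the bracketing of the composition, again by Lemma~\ref{lem:56}). If $\Delta(k)=\alpha\neq\perp$, part (1) gives $d_k\leq1$, contributing $0$; if $k\notin\dom\Delta$ it is free in no thread. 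It therefore suffices to show $d_k\leq2$ whenever $\Delta(k)=\perp$, since then each such channel contributes at most $\binom{2}{2}=1$ edge, giving at most $n$ edges overall. Fixing such a $k$, a $\perp$ enters a typing either through $\Did{T-Bot}$, where $k$ has type $\inactT$ below and is free in no thread ($d_k=0$), or through $\odot$ at some $\Did{T-Par}$ node combining $Q_1\pp Q_2$ with $k\in\dom\Delta_1'\cap\dom\Delta_2'$ and $\Delta_1'(k),\Delta_2'(k)\neq\perp$ (mutually dual by $\asymp$). Part (1) applied to $Q_1$ and to $Q_2$ bounds by one the threads of each side carrying $k$; and once $k:\perp$ no later $\Did{T-Par}$ can reintroduce $k$ on the opposite side, for the check $\asymp$ would then require $\overline\perp$, which is undefined. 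Hence the only threads with $k$ free are the at most two found beneath this node, so $d_k\leq2$.

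The main obstacle is this $\perp$-channel bound: it rests on reading off from the derivation the single point at which a session is consumed, and on the fact that $\perp$ has no dual, so a consumed session can never be shared again. The bound is deliberately loose—$\perp$'s arising from $\Did{T-Bot}$ contribute no edge—which is exactly why the count is at most $n$ rather than equal to $n$. The remaining work is bookkeeping: confirming $\fsc P\subseteq\dom\Delta$, that labels coincide with the free session channels of threads, and that the additive edge clause of Definition~\ref{def:sessiondependencygraph} really does accumulate to $\sum_k\binom{d_k}{2}$.
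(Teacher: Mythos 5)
Your proof is correct, and its load-bearing ingredients---linearity of the session environment enforced by $\odot$, the fact that $\ol\perp$ is undefined so a consumed session can never reappear on the other side of a later \Did{T-Par}, the inclusion $\fsc P\subseteq\dom\Delta$, and the observation that $k:\inactT$ forces $k$ free in no thread---are exactly the facts any proof of this lemma must exploit; the paper itself establishes the lemma by a single induction on the process (you correctly read the statement's stray ``$R$'' as $P$). Where you genuinely diverge is in the packaging of part (2): instead of threading an edge-count invariant through the induction, you flatten $P$ to $\gamma_1\pp\cdots\pp\gamma_\ell$ (legitimate by Lemma~\ref{lem:56} together with subject congruence, Theorem~\ref{thm:std}), observe that Definition~\ref{def:sessiondependencygraph} accumulates exactly $\sum_k\binom{d_k}{2}$ edges, and bound $d_k\le 2$ for $\perp$-channels by locating the unique derivation point (a \Did{T-Par} with $k$ in both domains, or a \Did{T-Bot}) at which $k$ is consumed, applying part (1) to the two subderivations there. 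The inductive alternative is more local: at each \Did{T-Par}, the fresh edges arise only from channels in $\dom\Delta_1\cap\dom\Delta_2$, at most one per channel by part (1) applied to each side, and precisely these channels turn to $\perp$ in $\Delta_1\odot\Delta_2$, so the invariant ``edges $\le$ number of $\perp$-entries'' is preserved without any global reasoning about where a given $\perp$ originated. Your route buys a sharper byproduct---the exact edge formula, and a transparent explanation of why the bound is ``at most'' rather than ``exactly'' $n$ (namely, $\perp$'s introduced by \Did{T-Bot} contribute no edge)---at the cost of the normalization step and the uniqueness-of-consumption argument, which you do justify correctly via the failure of $\asymp$ on $\perp$-entries.
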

\begin{proof}
  By induction on $R$. 
\ifapp
  See Appendix~\ref{app:lem:564} for details.
\fi
\end{proof}

Second, the structure of programs.  Recall from
Section~\ref{sec:calculus} that programs have no free session channels
and no non-trivial restricted session channels.
\begin{proposition}
\label{prop:programs-are-stable}
Let $P$ be a well-typed program and $Q$ any of its sub-process. Then
$\graph Q$ has no edges.
\end{proposition}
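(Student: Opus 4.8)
The plan is to prove the stronger statement that in every sub-process $Q$ of a program $P$, no session channel labels two distinct nodes of $\graph Q$. This suffices, because by Definition~\ref{def:sessiondependencygraph} an edge of $\graph Q$ between nodes $p$ and $q$ exists only when some $k\in\labels Q(p)\cap\labels Q(q)$, so if no channel occurs on two nodes there can be no edges. The whole argument turns on a single observation: the \emph{only} binder that can bring both dual endpoints of one session channel into a common scope is restriction $\new k$, and programs are restriction-free; every other session binder introduces a single endpoint.

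I would first settle the claim for the sub-terms of the restriction-free representative $P'$ furnished by the definition of program, by induction on the structure of a sub-term $Q$ (each such sub-term being itself typable, with the appropriate residual session environment). The cases $\INACT$ and a single prefix $\gamma$ are immediate, since $\graph\gamma$ is a single node and $\graph\INACT$ is empty. The only interesting case is $Q=Q_1\pp Q_2$: here a channel on two nodes would mean some $k$ is free in a thread of $Q_1$ and in a thread of $Q_2$. Since $\fsc{P'}=\emptyset$ and $P'$ has no restriction, such a $k$ must be bound above $Q$ by a service, a replicated service, a request, or a session-receive prefix (the binders typed by \Did{T-Serv}, \Did{T-RServ}, \Did{T-Req}, \Did{T-InS}); inspection of these rules shows the bound channel always enters the body with a single, non-$\perp$ session type, so $k$ is non-$\perp$ in the session typing of $Q$ as well. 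Lemma~\ref{lem:564}(1) then forces $k$ to label at most one node of $\graph Q$, contradicting the assumption. Hence there is no cross-boundary shared channel and, with the inductive hypotheses on $Q_1,Q_2$, $\graph Q$ is edge-free.

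It remains to pass from sub-terms of $P'$ to arbitrary sub-processes, i.e.\ to sub-terms of an arbitrary $P''\equiv P$. By Theorem~\ref{thm:std}(1) every such $P''$ is well-typed, and by Lemma~\ref{lem:56} its graph agrees with that of $P'$; I would propagate edge-freeness of \emph{all} sub-terms along a derivation of $P'\equiv P''$, treating each generating clause (i)--(vi) of $\equiv$. The rearrangement clauses (i)--(iv) and the garbage clause (vi) (which only inserts or deletes $\new k\INACT$, a process with empty graph) are trivial. The one substantive clause is scope extrusion (v), $P\pp\new k Q\equiv\new k(P\pp Q)$: because restriction leaves the edge set untouched (it only deletes labels), edge-freeness could be lost only if moving the scope of $k$ created a new cross-boundary sharing of $k$; but the side condition $k\notin\fsc P$ guarantees that no thread of $P$ carries $k$, so no such edge can appear.

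The main obstacle I anticipate is exactly this last propagation step. Its content is that structural congruence---unlike reduction, where \Did{Init} and \Did{RInit} deliberately fuse a service and a request into one shared restricted session---can never cause two independently bound endpoints to come to share a channel; the side condition of clause (v) is the precise syntactic expression of this fact, and verifying that it keeps the cross-boundary label intersections stable throughout an arbitrary congruence derivation (including rewrites nested inside contexts) is where the care lies. By contrast, the linearity step resting on Lemma~\ref{lem:564}(1) and on the non-standard single-session premise of \Did{T-Serv} should be routine.
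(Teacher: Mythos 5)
Your proof is correct, but it runs the linearity argument in the opposite direction from the paper's, and it uses the other half of Lemma~\ref{lem:564}. The paper argues \emph{upward}: by Lemma~\ref{lem:564}(2) it suffices to show that no sub-process of a program is typed with any $k:\perp$, and this holds because every typing rule propagates $k:\perp$ from premises to conclusion except \Did{T-Res} (inapplicable, programs being restriction-free) and \Did{T-Serv} (whose premise admits no $\perp$), so a $\perp$ anywhere would contradict the root judgement $\tproves{\Gamma}{P}{\emptyset}$. You argue \emph{downward}: a channel free on both sides of a parallel composition must be bound above by one of \Did{T-Serv}, \Did{T-RServ}, \Did{T-Req}, \Did{T-InS}, each of which introduces it at a single, non-$\perp$ session type; non-$\perp$-ness then persists down to the sub-term, and Lemma~\ref{lem:564}(1) caps the channel at one node. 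The two arguments are contrapositives of the same invariant ($\perp$ marks exactly the point where both endpoints of a session have met, and in a program no such point can occur), and each leaves a symmetric rule-by-rule inspection implicit --- your ``so $k$ is non-$\perp$ in the session typing of $Q$ as well'' silently requires that descending through \Did{T-Par}, the prefix rules, \Did{T-Cond} and \Did{T-Bot} preserves non-$\perp$-ness (true, since a non-$\perp$ type in the conclusion of \Did{T-Par} forces $k$ into exactly one premise, and \Did{T-Bot} read downward only turns $\perp$ into $\inactT$), which exactly mirrors the upward preservation claim the paper calls trivial. What your route buys is explicitness at the two places the paper glosses: the binder analysis pinpoints where a free channel of a sub-term can originate, and your congruence-propagation step honestly confronts the fact that sub-processes are sub-terms of arbitrary $P''\equiv P$, which may contain degenerate restrictions such as $\new k \INACT$ --- so the paper's remark that \Did{T-Res} ``does not apply to programs and their sub-processes'' is, read literally, slightly too strong, though harmless since such restrictions contribute no edges, as your treatment of clauses (v) and (vi) makes precise. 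What the paper's route buys is brevity: one uniform upward propagation, no case split on binders, and the $\equiv$-bookkeeping absorbed into subject congruence (Theorem~\ref{thm:std}(1)).
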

\begin{proof}
  Note $\tproves\Gamma P\emptyset$ for some $\Gamma$.  Let $R$ be a
  sub-process of $P$. Then, also $\tproves{\Gamma_R}{R}{\Delta_R}$ for
  some $\Gamma_R,\Delta_R$.  By Lemma~\ref{lem:564}, it is enough to
  prove that every $k\in\dom(\Delta_R)$ has $\Delta_R(k)\not=\bot$. We
  show that $\Delta_R(k)=\perp$ for some $k$ would contradict
  $\tproves\Gamma P\emptyset$. But for this, it is sufficient to prove
  that every rule preserves $k:\bot$ in $\Delta$ from premises to
  conclusion. This is trivial for all rules but \Did{T-Res} and
  \Did{T-Serv}. Now, \Did{T-Res} does not apply to programs and their
  sub-processes, and \Did{T-Serv} accepts no $k:\bot$ in its premises.
\end{proof}

\section{\Stable\  Processes}
\label{sec:stable}
In this section, we define the notion of \emph{\stable\ process} and
investigate its properties. In particular, we prove that every program
is \stable\ and that \stable\ processes are closed under
reduction. Along the way, we get to thoroughly exercise session dependency graphs,
exhibiting their particular strengths. These results pave the way for
proving that every \stable\ process has progress in the next section.

\smallskip

We define the \stable\ processes as those whose graphs are,
essentially, ``everywhere acyclic''.
\begin{definition}[\Stable\  Process]
  Let $P$ be a well-typed process. We say that $P$ is \emph{\stable}
  iff every sub-process $Q$ of $P$ has $\graph{Q}$ acyclic.
\end{definition}

\begin{example}\rm
  Neither the process of (\ref{eq:1}), of (\ref{eq:3}) nor of
  (\ref{eq:4}) are \stable. The former two because they themselves
  have cyclic session dependency graphs, the latter because it
  contains a sub-process which has (namely (\ref{eq:3})). However, the
  buyer-seller system from Example~\ref{ex1} is transparent.
\end{example}


\Stable\ processes are closed under structural congruence:
\begin{lemma}
  \label{lem:5600}
  Let $P,Q$ be well-typed processes s.t.~$P\equiv Q$. Then $P$ \stable\  iff $Q$ \stable.
\end{lemma}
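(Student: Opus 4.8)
The plan is to reduce the statement to the observation that the notion of \emph{sub-process} is, by its very definition, closed under structural congruence. Write $\mathsf{SubProc}(P)$ for the set of sub-processes of $P$, i.e.\ the set of those $R$ that are sub-terms of some $P'\equiv P$. The key fact I would establish first is that $P\equiv Q$ implies $\mathsf{SubProc}(P)=\mathsf{SubProc}(Q)$.

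To see this, I would invoke that $\equiv$ is an equivalence relation, in particular transitive and symmetric. Hence, for any term $P'$, we have $P'\equiv P$ if and only if $P'\equiv Q$ (combining the hypothesis $P\equiv Q$ with transitivity and symmetry). Consequently the two families $\{P' : P'\equiv P\}$ and $\{P' : P'\equiv Q\}$ coincide, and therefore so do the collections of their syntactic sub-terms, namely $\mathsf{SubProc}(P)$ and $\mathsf{SubProc}(Q)$.

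Finally, I would simply unfold the definition of \stability: $P$ is \stable\ exactly when every $R\in\mathsf{SubProc}(P)$ has $\graph{R}$ acyclic. Since $\mathsf{SubProc}(P)=\mathsf{SubProc}(Q)$, this universally quantified condition holds for $P$ if and only if it holds for $Q$, which is precisely the claim. Symmetry of $\equiv$ also renders the two directions of the ``iff'' interchangeable, so a single argument suffices.

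The step that carries the whole argument is the first one: recognising that \stability\ is phrased as a property quantified over \emph{sub-processes} rather than over syntactic sub-terms, so that closure under $\equiv$ comes for free from the definition of sub-process. In particular, unlike Lemma~\ref{lem:56}, this proof need not appeal to the $\equiv$-insensitivity of the dependency graph itself. I do not anticipate a genuine obstacle here; the only place difficulty could arise is if \stability\ were instead defined via syntactic sub-terms, in which case I would recover the result by a term-structural induction transporting acyclicity across each structural-congruence rule with the help of Lemma~\ref{lem:56}, and that inductive transport would be the real work.
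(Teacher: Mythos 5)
Your proposal is correct, but it takes a genuinely different route from the paper. The paper proves this lemma the same way it proves Lemma~\ref{lem:56}: by cases on the rules defining $\equiv$, transporting acyclicity of the session dependency graph across each axiom (commutativity/associativity of $\pp$, garbage collection of $\INACT$, $\alpha$-conversion, scope extrusion), in effect re-running the graph-invariance argument. You instead observe that the case analysis is unnecessary: the paper's definition of \emph{sub-process} is already closed under $\equiv$, so, since $\equiv$ is symmetric and transitive, $P\equiv Q$ forces $\{P' : P'\equiv P\}=\{P' : P'\equiv Q\}$ and hence literally the same set of sub-processes on both sides; \stability{}, being a universally quantified statement over exactly that set, transfers for free, with no appeal to Lemma~\ref{lem:56} at all. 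This is sound and strictly more economical: it pinpoints where the $\equiv$-robustness actually lives (in the $\equiv$-closure built into the definition of sub-process, not in the graph construction) and would survive any change to the definition of $\graph{-}$. What the paper's route buys is mainly uniformity with Lemma~\ref{lem:56}, whose graph-level invariance is independently needed elsewhere (e.g.\ in the proof of Theorem~\ref{thm:stabilitypreservation}); your fallback plan for a hypothetical syntactic-sub-term formulation --- induction over the congruence rules using Lemma~\ref{lem:56} --- is essentially the paper's actual proof. One pedantic point worth a line in a polished write-up: $\graph{-}$, and hence \stability{}, is only defined for well-typed processes, so you should note that the shared sub-processes are well-typed independently of the chosen representative, which is immediate from subject congruence (Theorem~\ref{thm:std}(1)); this affects both sides symmetrically and creates no gap in your argument.
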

\begin{proof}
  By cases on the definition of $\equiv$. 
\ifapp
  See Appendix~\ref{app:lem:56} for details.
\fi
\end{proof}

The rest of this section is dedicated to proving our first main
result, i.e., \stability{} is preserved by reduction. In the next
section, we will use these results to prove that \stable\ processes
progress.

In order to see that a reduction $P\to P'$ preserves \stability, we
need to relate the session dependency graph of $P$ to the one of
$P'$. Informally, we make the following observations:
\begin{enumerate}
\item The set of free session channels of a process is non-increasing
  under reduction. Thus, even though the nodes and edges of
  $\graph{P}$ and $\graph{P'}$ might be very different, the {\em
    labels} of $\graph{P'}$ contain only names also in labels of
  $\graph{P}$.
\item We can speak of a label $k$ of $\graph{P}$ having a path to a
  label $k'$ in both graphs if there is a path from a node labelled
  $k$ to a node labelled $k'$. Thus, we arrive at the invariant: if
  $k$ has a path to $k'$ in $\graph{P'}$, it also had so in
  $\graph{P}$. This property is enough to ensure \stability{}
  preservation.
\end{enumerate}


We proceed to define precisely this notion of a path from $k$ to $k'$.

\begin{definition}
  Let $P$ be a well-typed process, and let $k,k'\in\fsc{P}$. We say
  that ``$k\leadsto k'$ in $\graph P$'' iff there exist nodes $p,p'$
  in $\graph P$, labelled by $k,k'$ respectively, s.t.~there is a path
  from $p$ to $p'$ in $\graph P$.
\end{definition}

Observe that the ``$-\leadsto-$'' relation is reflexive (because
$k,k'$ need not be distinct and there is always a path from a node to
itself) and symmetric (because the graph is). Moreover, because any
two nodes $p,p'$ sharing a label $k$ will have an edge between
them, we could equivalently have defined ``$k\leadsto k'$ iff
\emph{for any two} nodes $p,p'$ labelled by $k,k'$, there is a path
from $p$ to $p'$.'' It follows that $- \leadsto -$ is 
transitive.

\begin{remark}\rm
  One of the key insights carrying the proof that reduction preserves
  \stability{} is that $-\leadsto -$ is non-increasing under reduction
  of transparent processes. It is instructive to see this for a
  particular case involving delegation. Consider the following
  reduction:
  \begin{displaymath}
      \underbrace{\outputP {k} 5}_{\gamma_1}
      \ \pp\
      \underbrace{\outputS  {k'} {k}\pfx \inputP {k'} x}_{\gamma_2}
      \ \pp\
      \underbrace{\inputS {k'} {k}\pfx \inputP {k} x \pfx
        \outputP{k'}7}_{\gamma_3}  
      \quad\to\quad
      \underbrace{\outputP {k} 5}_{\gamma_1}
      \ \pp\
      \underbrace{\inputP {k'} x}_{\gamma'_2}
      \ \pp\
      \underbrace{\inputP {k} x \pfx \outputP{k'}7}_{\gamma'_3}  
  \end{displaymath}
  Before reduction, $\gamma_1$ and $\gamma_2$ share session $k$, and
  $\gamma_2$ and $\gamma_3$ share the session $k'$. After reduction,
  session $k$ has moved, and is now between $\gamma_1$ and
  $\gamma_3$. Graphically, the session dependency graphs change as
  follows:
  \begin{center}
    \includegraphics{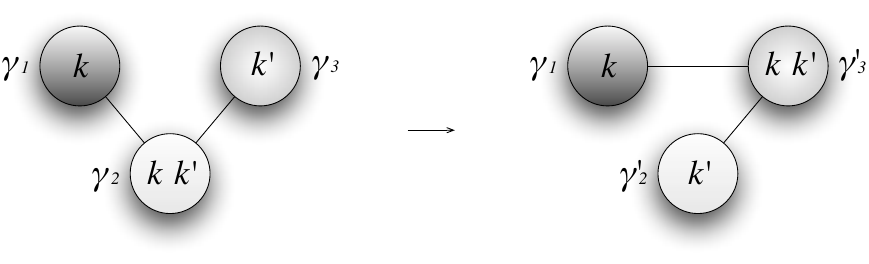}
  \end{center}
  However, as it is immediately obvious from the graphical
  representation, connectivity of $k$ and $k'$ did not change: they
  met at exactly one node before reduction, and they meet at exactly
  one node after reduction. This is the essential reason why
  delegation cannot introduce a cycle in the session dependency graph.
\end{remark}

\begin{theorem}\label{thm:stabilitypreservation}[Preservation of
  \stability]
  Let $R$ be a well-typed process s.t.~$R\to R'$. Then
  \begin{enumerate}
  \item If ~$\graph R$ is \stable\ then so is $\graph R'$, and
  \item if $k\not\leadsto k'$ in $R$, then also $k\not\leadsto k'$ in
    $R'$.
  \end{enumerate}
\end{theorem}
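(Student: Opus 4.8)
The plan is to prove both parts together, first reducing to the base reduction rules and then exploiting the combinatorial structure of session dependency graphs from Lemma~\ref{lem:564}. Using the congruence rules \Did{Par}, \Did{Res} and \Did{Str}, together with Lemma~\ref{lem:56} and Lemma~\ref{lem:5600} (which state that both $\graph{-}$ and \stability{} are invariant under $\equiv$), I would reduce to the situation $R \equiv \new{\tilde k}(\rho \pp S)$ and $R' \equiv \new{\tilde k}(\rho' \pp S)$, where $\rho \to \rho'$ is one of the base rules \Did{RInit}, \Did{Init}, \Did{Com}, \Did{Del}, \Did{Sel}, \Did{IfT}, \Did{IfF}, and $S$ is the unchanged environment of parallel threads. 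Since restriction leaves nodes and edges untouched and only strips labels, the restrictions $\tilde k$ can be ignored: acyclicity of $\graph{\new{\tilde k}(\rho \pp S)}$ coincides with that of $\graph{\rho \pp S}$, and bound channels are irrelevant to $-\leadsto-$.

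The observation I would record next is that, by Lemma~\ref{lem:564}, every edge of a well-typed graph is contributed by a \emph{matched} session channel, i.e.\ one whose two endpoints occur as distinct threads; a half-open channel labels at most one node and so contributes no edge. Hence the number of edges between two nodes equals the number of session channels they share, and acyclicity of a sub-process graph forces any two of its nodes to share \emph{at most one} channel. This is exactly the leverage that the hypothesis ``$R$ is \stable'' gives at the redex: in each base rule the two reacting threads share the subject of the communication, so \stability{} of $R$ already guarantees they share nothing else, and, more generally, that two distinct neighbours of a reacting node are not connected by any path avoiding that node.

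For part~(2) I would show that every path in $\graph{R'}$ lifts to a path in $\graph R$ between nodes carrying the same labels, so that $k \leadsto k'$ in $R'$ entails $k \leadsto k'$ in $R$---the contrapositive of the claim; the statement is vacuous when $k$ or $k'$ has become bound, since $\fsc{R'} \subseteq \fsc R$. The lift is read off rule by rule. For part~(1) I would then rule out new cycles. In \Did{Com} and \Did{Sel} the reduct is obtained from $\graph{\rho\pp S}$ by resolving each reacting node into the graph of its continuation---which is acyclic, being the graph of a sub-process of the \stable{} $R$---and relinking the incident edges; by the ``at most one shared channel'' bound the reacting nodes were joined only through the subject channel, and by the remark above their other neighbours were otherwise disconnected, so resolving a node into an acyclic graph cannot close a cycle. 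In \Did{Init} and \Did{RInit} the \Did{T-Serv}/\Did{T-RServ} discipline is decisive: a service body carries no free session channel beyond the freshly created $k$, so the two continuations were previously disconnected and the single, immediately restricted, $k$-edge cannot form a cycle.

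The step I expect to be the main obstacle is \Did{Del} together with the fan-out of continuations. Here three threads are involved---receiver, sender, and the holder of the other endpoint of the delegated channel $k'$---and $k'$ migrates from the sender to the receiver while the subject $k$ stays put; when a continuation is itself a parallel composition the reacting node also fans out into several nodes. One must check that neither the migration of $k'$ nor the redistribution of the remaining channels reconnects two nodes that are already connected, which is precisely what would create a cycle. The remark preceding the theorem isolates the governing geometric fact (the channels $k$ and $k'$ meet at exactly one node both before and after the delegation); turning it into a uniform bookkeeping argument across the fan-out and the environment $S$, and then lifting it from the top level to the required quantification over \emph{all} sub-processes of $R'$ (each of which is either, up to $\equiv$, a sub-process of $R$, hence already acyclic, or one containing the reduct, to which the same local argument applies), is where the real work lies.
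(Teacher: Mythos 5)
Your base-case ingredients largely coincide with the paper's: the appeal to Lemma~\ref{lem:564} for ``at most one node per live channel'', the counting of edges between the two reacting threads, the \Did{T-Serv} discipline in the \Did{Init} case, and proving part~(2) as a contrapositive path-lifting statement. But your global architecture diverges at exactly the point you yourself flag as ``where the real work lies'', and that work is not optional --- it is the theorem. The paper avoids it entirely by proving the two parts \emph{simultaneously, by induction on the derivation of} $R\to R'$, rather than flattening the context by $\equiv$. In the paper the environment never enters a base case: it is absorbed by the \Did{Par} case, where part~(2) serves as the strengthened induction hypothesis. There, a cycle in $P'\pp Q$ forces two distinct connecting edges induced by $k,k'$ with $k\leadsto_{P'}k'\leadsto_{Q}k$; IH(2) converts $k\leadsto_{P'}k'$ into $k\leadsto_{P}k'$, and transitivity of $\leadsto$ exhibits a cycle already in $P\pp Q$. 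Consequently each base case involves exactly two nodes, part~(2) is trivial there because the redex graph is connected (in \Did{Init} the service node has empty label by \Did{T-Serv}, so \emph{every} free session channel of the redex labels the request node, making $k'\leadsto k''$ hold for all pairs), and no fan-out or contraction bookkeeping is ever required. By flattening to $\new{\tilde k}(\rho\pp S)$ you forgo this device, and must then prove by hand, uniformly in $S$, that resolving a reacting node into the (possibly disconnected) graph of its continuation cannot close a cycle; this needs the edge-correspondence argument that each channel shared with $S$ labels at most one continuation node, so crossing edges are in label-preserving bijection and a new cycle contracts to a closed walk on distinct old edges. You acknowledge the difficulty of exactly this step but do not resolve it, so the proposal is incomplete precisely where the paper's inductive setup dissolves the problem.

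Two further concrete defects. First, your closing dichotomy --- every sub-process of $R'$ is, up to $\equiv$, either a sub-process of $R$ or one containing the reduct --- is false: since $\equiv$ includes commutativity and associativity, sub-processes of $R'$ include a \emph{fragment} of the reduct composed with an arbitrary sub-collection of environment threads (after \Did{Com}, for instance, $P[v/x]\pp S'$ without $Q$). These require their own argument; the standard repair is to observe that the graph of any parallel sub-collection embeds label-preservingly into the graph of the whole, so stability of the continuations plus top-level acyclicity suffices --- which is also what implicitly licenses the paper's ``it is sufficient to prove $\graph{P[v/x]\pp Q}$ acyclic''. Second, your part-(2) invariant, that paths lift to paths ``between nodes carrying the same labels'', fails literally in \Did{Del}: after reduction the delegated $k'$ labels the receiver's continuation, whose pre-image node was \emph{not} labelled $k'$, so the lifted path must be extended by the $k$-edge to reach the sender's node. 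Relatedly, note that acyclicity in the \Did{Del} case rests on a typing fact, $k'\notin\fsc{Q}$ for the sender's continuation (forced by \Did{T-Del}), which rules out the delegated channel itself re-linking sender and receiver; your geometric picture of ``$k$ and $k'$ meeting at exactly one node'' gestures at this, but it is the typing, not the geometry alone, that makes it true.
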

\noindent Before the proof, observe that part (2) is vacuously true for
      $k,k'\not\in\fsc{R}$.
      \begin{proof}[Proof (sketch).] Technically, the proof proceeds
        by induction on the derivation of $R\to R'$. Hereby, we
        discuss the most interesting cases. \ifapp All cases are reported in
        Appendix~\ref{app:thm:stabilitypreservation}.\fi
  \begin{itemize}

  \item \Did{Init}\quad$\initIn a k\pfx P \pp \initOut a k\pfx
    Q\to\new k (P\pp Q)$.
    \begin{enumerate}
    \item Because $\graph{\initIn a k\pfx P \pp \initOut a k\pfx Q}$
      is \stable, so are $\graph{P}$ and $\graph{Q}$.  Thus, in order
      to prove $\graph{\new k(P\pp Q)}$ \stable, it is sufficient to
      prove it acyclic. By \Did{T-Serv}, there exists $\alpha$
      s.t.~$\tproves {\Gamma} P {k:\alpha}$. Thus, by Lemma
      \ref{lem:564}, the nodes of $\graph{P}$ have empty labelling,
      except for at most one node; and that this unique node is
      labelled $k$ (if it exists at all). Now note that there exists
      also $\Delta'$ s.t.~ $\tproves {\Gamma} Q {\Delta'\cdot
        k:\overline\alpha}$. By linearity of the session environment,
      and Lemma \ref{lem:564}, we find that at most one node of
      $\graph Q$ is labelled $k$. Thus $\graph{\new k(P\pp Q)}$ is
      formed by adding at most one edge between acyclic graphs $\graph
      P$ and $\graph Q$ and is hence itself acyclic.

    \item It is sufficient to prove $k'\leadsto_R k''$ for any
      $k',k''\in\fsc{R}$. Observe that $\graph{\initIn a k\pfx P \pp
        \initOut a k\pfx Q}$ has exactly two nodes, one in
      $\graph{\initIn a k\pfx P}$ and one in $\graph{\initOut a k \pfx
        Q}$.  As $\fsc{\initIn a k\pfx P}$ is empty by typing, it
      follows that the free session channels of the $R$ are
      $\fsc{\initOut a k\pfx Q}$. But then the node $\graph{\initOut a
        k\pfx Q}$ is labelled by every free session channel name of
      $R$, whence trivially, $k'\leadsto_{R} k''$.
    \end{enumerate}

    \smallbreak



  \item \Did{Com}\quad$\inputP{k}{x}\pfx P\pp \outputP{k}e\pfx Q \to
    P[v/x]\pp Q\quad\qquad(e\Downarrow v)$.
    \begin{enumerate}
    \item As $\graph {R}=\graph{\inputP{k}{x}\pfx P\pp
        \outputP{k}e\pfx Q}$ is \stable\ then also $\graph P[v/x]$ and
      $\graph Q$ are \stable\ (the former because neither $v$ nor $x$
      are session channel names). It is now sufficient to prove that
      the graph $\graph {R'}=\graph{P[v/x]\pp Q}$ is acyclic.  Now,
      assume for a contradiction that $\graph{R'}$ contains a
      cycle. Because $\graph P$ and $\graph Q$ are \stable\ and thus
      acyclic, it must be the case that there exist two distinct edges
      connecting $\graph P$ and $\graph Q$.  Suppose wlog that these
      edges arise from $k',k''$ with $k',k''\in\fsc{P[v/x]}\cap
      \fsc{Q}$, and $k'\leadsto_P k''\leadsto_Q k'$. As $x$ is not a
      session channel name, so also $k',k''\in\fsc{P}$. But then also
      $k',k''\in\fsc{\inputP{k}{x}\pfx P}$ and
      $k',k''\in\fsc{\outputP{k}e\pfx Q}$, so a cycle is already
      in $\graph{R}$. Contradiction.
    \item We prove again that $k'\leadsto k''$ in $\graph{R}$ for any
      $k',k''\in\fsc{R}$.  Observe again that $\graph{R}=\graph{
        \inputP{k}{x}\pfx P\pp \outputP{k}e\pfx Q}$ has exactly two
      nodes, this time with an edge between them induced by $k$. But
      then $\graph R$ is in fact connected, so trivially $k'\leadsto
      k''$ in $\graph R$.
    \end{enumerate}
  \item \Did{Del}\quad $\inputS{k}{k'}\pfx P\pp \outputS{k}{k'}\pfx Q \to
     P\pp Q$
     \begin{enumerate}
     \item By assumption $R=\inputS{k}{k'}\pfx P\pp
       \outputS{k}{k'}\pfx Q$ \stable, so also $P,Q$ is \stable. It is
       now sufficient to prove that $P\pp Q$ is itself
       acyclic. Suppose it is not. Both $P,Q$ \stable\ and hence
       acyclic, so $P\pp Q$ cyclic must mean that there exists
       $k'',k'''\in\fsc{P}\cap\fsc{Q}$. We must have, $k'=k''$ or
       $k'=k'''$, or $R$ is itself cyclic; assume $k'=k'''$.  Thus
       $k',k''\in\fsc{P}\cap\fsc{Q}$. But because $R$ is well-typed,
       we must have $k'\not\in\fsc Q$; contradiction.
     \item Identical to the \Did{Com} case.
     \end{enumerate}


   \item \Did{Par}\quad $P \to P' \quad\Rightarrow\quad P\pp Q \to
     P'\pp Q$.
     \begin{enumerate}
     \item Because $P\pp Q$ \stable, also $P,Q$ \stable. By induction
       hypothesis, also $P'$ \stable, so it is sufficient to prove
       $P'\pp Q$ acyclic.  Assume for a contradiction that $P'\pp Q$
       contains a cycle. As $P',Q$ both acyclic, there must exist two
       edges between $\graph {P'}$ and $\graph Q$. Assume wlog that
       these edges are induced by distinct
       $k,k'\in\fsc{P'}\cap\fsc{Q}$ with $k\leadsto_{P'} k' \leadsto_Q
       k$. Because $P\to P'$ implies $\fsc{P'}\subseteq\fsc{P}$, we
       find $k,k'\in\fsc{P}$, so by induction hypothesis, $k\leadsto_P
       k'$, and by composition $k\leadsto_P k'\leadsto_Q k$, and
        a cycle is already in $P\pp Q$. Contradiction.

     \item We prove the contrapositive: Supposing $k\leadsto_{P'\pp Q}
       k'$ for $k,k\in\fsc{P\pp Q}$, we will see 
       that $k\leadsto_{P\pp Q} k'$. There must  exist a sequence
       $k=k_0,\ldots,k_n=k'$ with each $i$ having either
       $k_i\leadsto_{P'} k_{i+1}$ or $k_i\leadsto_Q k_{i+1}$. By the induction
       hypothesis,
       each $k_i\leadsto_{P'} k_{i+1}$ implies $k_i\leadsto_P
       k_{i+1}$, so, by stringing the path back together and
       noting that $\leadsto$ is transitive, we find $k\leadsto_{P\pp Q}
       k'$.  \end{enumerate} 
     \end{itemize}
     \vskip-1.25\baselineskip
\end{proof}




\section{Progress}
\label{sec:progress}
In this section, we give our main technical results: that every
\stable\ process has progress; and that all programs are \stable.
Intuitively, a process has {\em progress} if it cannot reduce to a
process that is ``stuck''. We shall follow \cite{DLY07} in taking a
non-trivial process to be stuck if it is irreducible in any
context. Thus, we do not consider a process stuck if it needs a
service or a counter-party to an active session.  In the sequel, a
process contains no live session channels whenever all its session
channels are bound by (repServ) or (serv). 
%
%
%
%
Moreover, let $E[\cdot]$ denote a reduction context, i.e., $E[\cdot]\
::=\ \cdot\ \ \mid\ \ E[\cdot]\pp P\ \ \mid\ \ \new k E[\cdot]$.
\begin{definition}[Progress]
  A process $P$ has 
  progress if $P\to^*P'$ implies that
  for every reduction context $E[\cdot]$ with $P'\equiv E[P'']$,
  whenever $P''$ contains live channels then there exists a process
  $Q$ s.t.
    \begin{center}
      \fbox{
        \begin{tabular}{rlrlllll}

          (a) & $Q\not\to$     &      
          (c) & $P''\pp Q\to R$
          \\[1mm]
          (b) & $P''\pp Q$ is well-typed   & \qquad(d) & $R$ has 
          progress\\[1mm]

        \end{tabular}
        }
    \end{center}
\end{definition}
\ \\
\NI Observe that processes with no live channels have progress.  




\smallskip

\NI This very strong but somewhat intentional definition of progress
captures the intuition that ``a stuck process is one where no thread
is permanently blocked''.  For a process $P$ to have progress, any
process $P'$ reachable from $P$ must be either without any live
channel or such that any of its top-level sub-processes $P''$ can be
composed in parallel with a process $Q$ and: (a) $Q$ is stuck; (b)
$P''\pp Q$ is well typed, i.e., $Q$ only provides services, requests
or counter-parties to active sessions in $P''$; and (c,d) $P''\pp Q$
can reduce to a process that also has progress.
The focus on sub-processes is to make sure that
non-terminating processes do not automatically have progress, e.g.,:
  \begin{equation}\label{equation123213}
   \inputP {k'} x\pfx \outputP {k''} x \ \pp\
   \inputP {k''} x\pfx \outputP {k'} x \ \pp\
   \repInitIn a k \pfx \initOut a k\ \pp\ \initOut a k
  \end{equation}



\begin{remark}\rm
  The definition above differs somewhat from the ones found in the
  literature, e.g., \cite{DLY07,BCDDDY:concur2008}. The present one has the distinct
  advantage that is independent of the means chosen to
  \emph{establish} progress.
  Other works use a special typing system to \emph{establish} progress
  already in the \emph{definition} of progress itself.
  For instance, the processes (\ref{eq:2}) and (\ref{eq:2asd}) on
  page~\pageref{eq:2}, intuitively have progress: all of their
  sessions run to termination when given access to appropriate
  services. And, both processes have progress wrt our
  definition. However, neither has progress as defined in
  \cite{DLY07}, where the definition of progress is inextricably
  linked to the typing system guaranteeing it, and both of those
  processes are untypable in that system (our definition requires
  typability for guaranteeing in-session linearity but it does not
  require transparency). As a further
  example, 
  write $P_1$ for the $\inputP {k'}x\pfx \outputP {k''} x \pp \inputP
  {k''} x\pfx \outputP {k'} x$ and $P_2$ for the $\inputP {k'}x\pfx
  \outputP {k''} x \pp \outputP {k'}e\pfx \inputP {k''} x$ in
 \[
 k\triangleright\{\quad l_1: P_1 \quad l_2: P_2\quad\}\quad\pp\quad\select k {l_2}
 \]
 Again, this process intuitively has progress --- it can only run to
 termination ---, it is included by the present definition of
 progress, but not by the one of \cite{DLY07}. Note that the process
 above is not transparent.
\end{remark}

Having established that \stability\ is preserved by reduction, we
proceed to show that a well-typed, \stable\ process has progress. 
In particular, we need to show that we can build the additional process
$Q$ found in the definition of progress. Our idea is to do that by
exploiting the type of a given process. The next result shows that every
session type is in fact inhabited, i.e., from a session type $\alpha$
we can reconstruct a process which behaves exactly as specified by
$\alpha$.  In the sequel, we shall assume that all basic types are
inhabited. For convenience, we further assume that they are all
inhabited by the same value ``$1$''; however this assumption is easily
rendered unnecessary.
\begin{lemma}[Every session type is inhabited]
  \label{lem:inhabitation}
  For any session type $\alpha$ and session channel $k$, there exists
  a \stable\ process $\llbracket
  \alpha\rrbracket^k$ with $\tproves {} {\llbracket
    \alpha\rrbracket^k} {k:\alpha}$, defined in Figure \ref{fig:inhab}.
\begin{figure}
  \begin{center}
    \fbox{
    \begin{tabular}{rl}

    \Did{inVal}&\qquad
    $\llbracket\inputT{\mathsf{basic}}\pfx\alpha\rrbracket^k= \inputP k
    x\pfx\llbracket\alpha\rrbracket^k$
    \\[1mm]
    \Did{inServ}&\qquad
    $\llbracket\inputT{\langle\beta\rangle}\pfx\alpha\rrbracket^k= 
    \inputP k a\pfx\llbracket\alpha\rrbracket^k$
    \\[1mm]
    \Did{inSess}&\qquad
    $\llbracket\inputT{\beta}\pfx\alpha\rrbracket^k= 
    \inputS k {k'}\pfx (\llbracket \alpha\rrbracket^k\pp \llbracket \beta \rrbracket^{k'})$
    \\[1mm]
    \Did{outVal}&\qquad
    $\llbracket\outputT{\mathsf{basic}}\pfx\alpha\rrbracket^k=
    \outputP k 1\pfx\llbracket\alpha\rrbracket^k$ 
    \\[1mm]
    \Did{outServ}&\qquad
    $\llbracket\outputT{\langle\beta\rangle}\pfx\alpha\rrbracket^k=
    \outputP k a\pfx (\llbracket \alpha \rrbracket^k\pp
    \repInitIn a {k'} \pfx \llbracket \beta \rrbracket^{k'})$
    \\[1mm]
    \Did{outSess}&\qquad
    $\llbracket\outputT{\beta}\pfx\alpha\rrbracket^k= 
    \new {k'} (\outputS k {k'} \pp \llbracket \alpha \rrbracket^k
    \pp\llbracket \ol\beta\rrbracket^{k'}) $
    \\[1mm]
    \Did{inSum}&\qquad
    $\llbracket\branchT l\alpha\rrbracket^k =
    \branch k l {\llbracket \alpha {\rrbracket^k}}$
    \\[1mm]
    \Did{outSum}&\qquad
    $\llbracket \selectT l\alpha\rrbracket^k =
    \select k l_1 \pfx \llbracket {\alpha_1} \rrbracket^k$
    \\[1mm]
    \Did{end}&\qquad
    $\llbracket\inactT\rrbracket^k = \INACT$.\\[1mm]
  \end{tabular}
  }
  \end{center}
  \caption{Translation $\llbracket-\rrbracket^-$}
\label{fig:inhab}  
\end{figure}
\end{lemma}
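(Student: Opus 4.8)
The plan is to prove all three assertions --- well-typedness, transparency, and the auxiliary bound $\fsc{\llbracket\alpha\rrbracket^k}\subseteq\{k\}$ --- simultaneously by structural induction on the session type $\alpha$, following the clauses of Figure~\ref{fig:inhab}. Carrying the extra invariant $\fsc{\llbracket\alpha\rrbracket^k}\subseteq\{k\}$ along the induction is what makes it go through: it guarantees that the only session channel a translated process can share with anything placed beside it is $k$ itself, which in turn controls the edges of the session dependency graph.

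The typing part is routine. In each clause I apply the matching rule of Table~\ref{table:typingrules} to the subderivations supplied by the induction hypothesis: the purely sequential clauses (\Did{inVal}, \Did{inServ}, \Did{outVal}, \Did{inSum}, \Did{outSum}) are direct applications of \Did{T-In}, \Did{T-Out}, \Did{T-Bra}, resp.\ \Did{T-Sel}; for \Did{inSess} and \Did{outServ} I first combine the two components with \Did{T-Par} (their session environments have disjoint domains, so the duality check $\asymp$ is vacuous and $\odot$ introduces no $\perp$) before applying \Did{T-InS}, resp.\ \Did{T-Out}; and \Did{outSess} uses \Did{T-Del} under a \Did{T-Res}. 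One bookkeeping point: the clauses \Did{inServ} and \Did{outServ} introduce a service name $a$; in \Did{outServ} this $a$ is free and must be recorded as $a:\langle\beta\rangle$, so the service environment $\Gamma$ is the one collecting these (freshly chosen) service declarations. The bound $\fsc{\llbracket\alpha\rrbracket^k}\subseteq\{k\}$ is read off each clause directly, noting that the binders $\inputS k{k'}$, $\repInitIn a{k'}$ and $\new{k'}$ remove $k'$ from the free session channels of their bodies.

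For transparency, the bodies and branches are \stable{} by induction, so --- using that \stability{} is $\equiv$-invariant (Lemma~\ref{lem:5600}), which lets me ignore the associativity of $\pp$ --- it suffices to check that the graph of $\llbracket\alpha\rrbracket^k$ itself and of each intermediate parallel sub-term is acyclic. Every sequential clause yields a single-node graph and is trivially acyclic. In \Did{inSess} and \Did{outServ} the two parallel components have free session channels contained in the disjoint sets $\{k\}$ and $\{k'\}$ (in \Did{outServ} the replicated service even has none, by the bound and the binder $\repInitIn a{k'}$), so no edge joins them and the graph is a disjoint union of the acyclic graphs given by induction.

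The delicate case, and the one I expect to be the main obstacle, is \Did{outSess}, $\llbracket\outputT\beta\pfx\alpha\rrbracket^k=\new{k'}(\outputS k{k'}\pp\llbracket\alpha\rrbracket^k\pp\llbracket\ol\beta\rrbracket^{k'})$, since here a single delegating thread $d=\outputS k{k'}$ is labelled by \emph{both} $k$ and $k'$ and thus may link the two otherwise disjoint components $\graph{\llbracket\alpha\rrbracket^k}$ and $\graph{\llbracket\ol\beta\rrbracket^{k'}}$. If $\graph{\llbracket\alpha\rrbracket^k}$ had two distinct nodes labelled $k$, then $d$ together with those two nodes --- which share $k$ and hence carry an edge --- would form a cycle. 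Here I invoke Lemma~\ref{lem:564}(1): since $\tproves{\Gamma}{\llbracket\alpha\rrbracket^k}{k:\alpha}$ with $\alpha\neq\perp$, at most one node of $\graph{\llbracket\alpha\rrbracket^k}$ is labelled $k$, and symmetrically at most one node of $\graph{\llbracket\ol\beta\rrbracket^{k'}}$ is labelled $k'$. Thus $d$ attaches by at most one edge to each of two disjoint acyclic components, which cannot create a cycle; and the outer $\new{k'}$ only deletes labels, leaving nodes and edges --- hence acyclicity --- untouched. This completes the induction.
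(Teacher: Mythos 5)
Your overall strategy is the right one and, apart from one case, it is carried out correctly: structural induction on $\alpha$ strengthened with the invariant $\fsc{\llbracket\alpha\rrbracket^k}\subseteq\{k\}$, clause-by-clause typing, single-node graphs for the sequential clauses, label-disjointness for \Did{inSess} and \Did{outServ}, and Lemma~\ref{lem:564}(1) to bound the number of edges in the delegation case. Your bookkeeping remark about $\Gamma$ is also a genuine and correct observation: the name $a$ introduced by \Did{outServ} is free (the calculus has no restriction on service channels), so the service environment cannot literally be empty; the paper itself silently accounts for this by writing $\tproves{\Gamma,\Gamma'}{\cdots}{\Delta}$ when it invokes this lemma in the proof of Theorem~\ref{thm:progressstableprocesses}.

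There is, however, a real flaw in your \Did{outSess} case: your typing argument and your graph argument use two different parses of the clause, and under the parse your graph argument relies on, the typing step fails. Your graph analysis treats the term as \emph{three} parallel threads with a bare delegation $d=\outputS{k}{k'}$, i.e.\ $\outputS{k}{k'}\pfx\INACT$. But then the typing you dismiss as ``\Did{T-Del} under a \Did{T-Res}'' does not exist: \Did{T-Del} forces $\tproves{\Gamma}{\outputS{k}{k'}\pfx\INACT}{k:\outputT\beta\pfx\inactT\cdot k':\beta}$, and composing this with $\tproves{\Gamma}{\llbracket\alpha\rrbracket^k}{k:\alpha}$ by \Did{T-Par} requires the duality $\outputT\beta\pfx\inactT=\ol\alpha$ (false for general $\alpha$), and would in any case produce $k:\perp$ via $\odot$ --- never the required $k:\outputT\beta\pfx\alpha$. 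So the three-thread term is untypable at the stated type, and the lemma would be false for it. Since the grammar requires delegation to carry a continuation, the clause must be read as $\new{k'}(\outputS{k}{k'}\pfx\llbracket\alpha\rrbracket^k\pp\llbracket\ol\beta\rrbracket^{k'})$; then \Did{T-Del} (with premise $k:\alpha$), \Did{T-Par} (the duality check on $k'$ is $\beta=\ol{\ol\beta}$) and \Did{T-Res} yield exactly $k:\outputT\beta\pfx\alpha$. Under this parse your graph argument must be adjusted: the prefix $\outputS{k}{k'}\pfx\llbracket\alpha\rrbracket^k$ contributes a \emph{single} node (labelled by a subset of $\{k,k'\}$) to the top-level graph, so your appeal to Lemma~\ref{lem:564}(1) for ``at most one node of $\graph{\llbracket\alpha\rrbracket^k}$ labelled $k$'' is moot there --- $\llbracket\alpha\rrbracket^k$ matters only for the sub-process condition, which the induction hypothesis covers. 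The half of your appeal concerning $\llbracket\ol\beta\rrbracket^{k'}$ is the one genuinely needed: two nodes $p,q$ labelled $k'$ would give edges from the prefix node to each of $p,q$ plus the edge between $p$ and $q$, i.e.\ a cycle. The repair is local and easy, but as written the typing step fails in precisely the case you single out as delicate.
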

 \smallskip
\NI In the above lemma, the first three cases 
reconstruct a process that inputs a value, a service channel and a
session respectively. \Did{outVal}, \Did{outServ} and \Did{outSess}
are their dual counterpart. \Did{inSum} and \Did{outSum} are external
and internal choice while \Did{end} generates the inactive process
from the end type.
\begin{example}\rm
Here is an illustrative example. 
\[
  \llbracket
  \inputT{\;
    \inputT{\mathsf{basic}}\pfx
    \outputT{\mathsf{basic}}\;}
  \pfx\outputT{\mathsf{basic}}\rrbracket^k = 
  \inputS k {k'}\pfx (
    \outputP k 1
  \pp
    \inputP {k'} x\pfx
    \outputP {k'} 1
  )
\]
Note how received values are never used and how the process only ever
sends ``$1$''.
\end{example}
\begin{example}\rm
  In the buyer-seller protocol, the session type of the session
  \texttt{buy} reported in Example~\ref{asdfjahsf} is
  $\outputT{\texttt{int}}\pfx\&\{\
  \texttt{ok}:\outputT{\texttt{string}}\pfx\inactT,\quad
  \texttt{stop}:\inactT\ \}$.
  According to the construction from the previous Lemma, for some $k$,
  we can build the process $
  \outputP k{1}\pfx k\triangleright\{\
  \texttt{ok}:\outputP k{"a"}\pfx\INACT,\quad \texttt{stop}:\inactT\
  \}$, where we have chosen to inhabit \texttt{string} by the string
  $"a"$.
\end{example}
Knowing that every session type is inhabited, we can prove that
well-typed, \stable\ processes either have no live channels or can
reduce given the proper environment.

\begin{theorem}[Reduction of \Stable\  Processes]
\label{thm:progressstableprocesses}
Let $P$ be a well-typed, \stable{} 
process.  Either $P$ has no live channels, or there exists some
$Q\not\to$ such that $P\pp Q$ is well-typed, \stable{} 
and $P\pp Q \to$.
\end{theorem}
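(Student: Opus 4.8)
The plan is to analyze the structure of a well-typed stable process $P$ and, when it has live channels, construct a non-reducing companion $Q$ that provides exactly the missing counterparts to unblock one thread of $P$. I would first put $P$ in the canonical form $P \equiv \new{\tilde{k}}(\gamma_1 \pp \cdots \pp \gamma_n)$, so that each $\gamma_i$ is a thread sitting at a top-level prefix. If $P$ has live channels, at least one thread is a prefix blocked on a service name or a live session channel. The key is to find a thread whose blocking prefix has \emph{no matching partner} already present among the other threads --- for such a thread we can supply the partner via $Q$ and force a reduction.

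\paragraph{Finding the right thread.}
First I would handle the easy case: if any two threads already form a redex (e.g.\ a matching input/output on a common session $k$, a matching service/request on some $a$, or a conditional/if-then-else that can fire), then $P$ itself reduces and we may take $Q = \INACT$, which trivially satisfies (a), and (b) by Theorem~\ref{thm:std}, leaving $P\pp Q \equiv P \to$. The interesting case is when $P$ is \emph{internally} stuck. Here I would invoke acyclicity of $\graph P$ --- guaranteed since $P$ is \stable{} --- to locate a ``source'' thread. Concretely, by Lemma~\ref{lem:564} the labelling is linear enough that the session-sharing relation among threads forms an acyclic graph; in such a graph there must be a thread $\gamma$ blocked on a channel for which the dual endpoint is \emph{not} held by any sibling thread (otherwise tracing dependencies would close a cycle, contradicting acyclicity). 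This is the thread we shall complete.

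\paragraph{Building $Q$.}
Having isolated the blocked thread $\gamma$ with blocking subject $c$, I would construct $Q$ using the inhabitation result of Lemma~\ref{lem:inhabitation}. If $\gamma$ is blocked on a service name $a$ (awaiting a service or a request), $Q$ supplies the dual service/request wrapping an inhabiting body $\llbracket \alpha \rrbracket^{k}$ for the appropriate type $\alpha$ read off from $\Gamma$; note the replicated/service-guarded form means $Q\not\to$ on its own. If $\gamma$ is blocked on a live session channel $k$ whose dual endpoint is absent, $Q = \llbracket \ol{\alpha} \rrbracket^{k}$ where $k:\alpha$ is the type assigned by the session environment for $P$; the inhabiting process is guarded by a single prefix, so again $Q\not\to$. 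In either case duality of the supplied type guarantees that $\gamma$ and (a thread of) $Q$ form a redex, giving $P \pp Q \to$, which is (c); well-typedness (b) follows because $Q$ has exactly the dual type by construction and Theorem~\ref{thm:std}.

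\paragraph{Stability of the composite.}
The final obligation is that $P \pp Q$ is itself \stable, since the definition of progress (and the later induction) requires feeding a \stable{} process back in. This is where I expect the main obstacle. The processes $\llbracket\alpha\rrbracket^k$ are \stable{} by Lemma~\ref{lem:inhabitation}, but I must show that adding the single edge (or edges) between $\graph{Q}$ and $\graph{P}$ introduced by the shared subject $c$ does not create a cycle in $\graph{P\pp Q}$. The argument mirrors the acyclicity reasoning in Theorem~\ref{thm:stabilitypreservation}: $Q$'s graph is acyclic and attaches to $P$ only through the single channel $c$, whose dual was by choice \emph{absent} from $P$, so $c$ labels at most one node of $\graph P$ (Lemma~\ref{lem:564}) and the attachment adds a single bridging edge between two acyclic graphs, which cannot close a cycle. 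I would need to verify that the inhabiting process shares \emph{no other} free session channel with $P$ --- which holds because $\llbracket\ol\alpha\rrbracket^k$ restricts or freshly binds all auxiliary channels it introduces (cases \Did{inSess}, \Did{outSess}, \Did{outServ}) --- so the only point of contact is indeed $c$. Pinning down this ``single clean attachment point'' precisely, especially in the delegation cases where $\llbracket-\rrbracket$ introduces nested restrictions, is the delicate part of the proof.
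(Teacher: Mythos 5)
Your proposal matches the paper's own proof in all essentials: the same canonical decomposition $P\equiv\new{\tilde k}(\gamma_1\pp\cdots\pp\gamma_n)$ with $Q=\INACT$ when $P$ already reduces, the same use of Lemma~\ref{lem:inhabitation} to supply a dual service for a pending request or $\llbracket\ol\alpha\rrbracket^k$ for an unmatched session endpoint, and the same acyclicity argument --- your direct search for a ``source'' thread whose dual endpoint is absent is exactly the contrapositive of the paper's concluding contradiction, where $n$ live threads whose blocked channels (pairwise distinct, by typing and $P\not\to$) are all internally matched yield $n$ edges on $n$ nodes and hence a cycle. Your added detail on \stability{} of $P\pp Q$ (single bridging edge via Lemma~\ref{lem:564}) and the redundant case of supplying a request to a waiting service merely spell out, harmlessly, what the paper dismisses as ``clearly''.
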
 

\begin{proof}
  If $P$ has no live channels then we are done. Also, if $P\to$ then
  also $P\pp \INACT\to$, $\INACT\not\to$ trivially and clearly $P\pp
  \INACT$ \stable. So assume $P\not\to$. Assume wlog $P\equiv
  \new{\tilde k
  } (\gamma_1\pp\cdots \pp \gamma_n)$ and
  $\tproves \Gamma P\Delta$.

  Suppose first that for some $i$, $\gamma_i$ is a service invocation
  $\gamma_i\equiv \initOut a k\pfx Q$. 
  Then, by Lemma \ref{lem:inhabitation} there exists some $\Gamma'$
  with $\tproves {\Gamma,\Gamma'} {P\pp \initIn a k \pfx \llbracket
    \Gamma(a)\rrbracket^k} \Delta$; clearly $P\pp \initIn a k \pfx
  \llbracket \Gamma(a)\rrbracket^k$ is \stable\ and reduces.
  
  Suppose instead that no $\gamma_i$ is such a service invocation. Now
  consider the case where for some $k$, $k:\alpha\in\Delta$.  Clearly
  $k\not\in \tilde k$, so again by Lemma \ref{lem:inhabitation}, $P\pp
  \llbracket \overline\alpha \rrbracket^k$ is well-typed and \stable,
  and this process clearly reduces.  So, consider instead (and
  finally) the case where every $k$ mentioned by $\Delta$ in fact has
  $k:\bot\in\Delta$. We shall arrive at a contradiction, demonstrating
  that this typing is not possible for $P$ \stable\ with
  $P\not\to$. Observe first that, up to labels,
  $\graph{P}\simeq\graph{\gamma_1\pp\cdots\pp\gamma_n}$. We shall
  treat each $\gamma_i$ interchangeably as a process and as a node of
  this graph. Suppose wlog that no $\gamma_i$ has no live channels
  (otherwise, observe that $\gamma_i$ would contain no free session
  channels, whence we may conduct the following argument in the
  sub-graph of those $\gamma_i$ that have live channels). Thus, each
  $\gamma_i$ has an enabled action on some $k_i$. Because $P\not\to$
  and $P$ well-typed, the $k_i$ are pairwise distinct. Because $P$
  well-typed, for each $i$, there exists a unique $j$
  s.t.~$k_i\in\fsc{\gamma_j}$. But then $\graph{P}$ has $n$ nodes and
  at least $n$ edges, and must thus contain a cycle, contradicting
  \stability\ of $P$.
\end{proof}


\begin{theorem}
\label{thm:progress}
Well-typed and \stable{} 
processes have
progress.
\end{theorem}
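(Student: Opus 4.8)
The plan is to prove that well-typed, transparent processes have progress by directly unfolding the definition of progress and invoking Theorem~\ref{thm:progressstableprocesses} together with the closure results already established. Since progress is a property closed under reduction, the essential work is to show that the single-step obligation in the definition can always be met, and then to argue the result propagates to all reachable processes.

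First I would fix a well-typed transparent $P$ and suppose $P\to^* P'$. By Theorem~\ref{thm:std}, $P'$ is again well-typed, and by Lemma~\ref{lem:5600} together with Theorem~\ref{thm:stabilitypreservation}(1), transparency is preserved along the reduction sequence, so $P'$ is well-typed and transparent. Now take any reduction context $E[\cdot]$ with $P'\equiv E[P'']$ and suppose $P''$ contains live channels. The crucial observation is that $P''$, being a sub-process of the transparent $P'$, is itself well-typed and transparent (transparency is defined as acyclicity at every sub-process, and by Lemma~\ref{lem:5600} this is stable under the $\equiv$ used to extract $P''$). Hence I can apply Theorem~\ref{thm:progressstableprocesses} to $P''$: since $P''$ has live channels, the theorem yields a process $Q$ with $Q\not\to$, with $P''\pp Q$ well-typed and transparent, and with $P''\pp Q\to R$ for some $R$. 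This $Q$ directly discharges conditions (a), (b) and (c).

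The remaining obligation is condition (d): that the resulting $R$ again has progress. This is where the argument must close the loop. The reduct $R$ is well-typed by Theorem~\ref{thm:std} and transparent by Theorem~\ref{thm:stabilitypreservation}(1), since $P''\pp Q$ is transparent and $P''\pp Q\to R$. Thus $R$ falls into exactly the class of processes the theorem is about. The natural way to finish is therefore a coinductive or well-founded reading of progress: I would argue that the property ``is well-typed and transparent'' is itself a progress-witnessing invariant, so that establishing conditions (a)--(c) for every reachable $P''$ with $R$ again well-typed and transparent is enough to conclude progress for all such processes simultaneously. Concretely, I would show that the set of well-typed transparent processes satisfies the defining clause of progress with the closure witnesses just constructed, and that $R$ lands back in this set, so progress holds by the definition's own reflexive-transitive quantification over $P\to^* P'$.

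The main obstacle I anticipate is the self-referential shape of condition (d): progress of $R$ cannot be assumed as an induction hypothesis because $R$ need not be structurally smaller than $P$, and the reduction sequence may be infinite. The clean resolution is to observe that the definition of progress already quantifies over \emph{all} $P'$ with $P\to^* P'$, so one does not prove progress of $R$ separately but rather shows once and for all that every well-typed transparent process meets clauses (a)--(c) at each of its reachable $P''$, with the reduct staying well-typed and transparent. In other words, I would recast the statement as: the class of well-typed transparent processes is closed under reduction (Theorem~\ref{thm:stabilitypreservation} and Theorem~\ref{thm:std}) and every member satisfies the one-step obligation (Theorem~\ref{thm:progressstableprocesses}); these two facts are precisely what the definition of progress requires, so the theorem follows without a separate inductive derivation of clause (d).
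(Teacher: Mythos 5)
Your proposal is correct and follows essentially the same route as the paper's proof: apply Theorem~\ref{thm:progressstableprocesses} to the (well-typed, transparent) sub-process $P''$ to discharge clauses (a)--(c), then use Theorem~\ref{thm:std} and Theorem~\ref{thm:stabilitypreservation} to show the reduct $R$ is again well-typed and transparent. Your explicit resolution of the self-referential clause (d) --- reading ``well-typed and transparent'' as a reduction-closed invariant that satisfies the one-step obligation, rather than attempting a structural induction --- is exactly the reasoning the paper leaves implicit when it stops after concluding that $R$ is well-typed and transparent.
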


\begin{proof}
  Suppose $P$ is well-typed and \stable{}.
  Moreover, let $P\to^*P'$ and $P'\equiv E[P'']$. Observe that by
  definition also $P''$ is well-typed and \stable{}. Now, if $P''$ has
  no live channels, we are done. Otherwise, by
  Theorem~\ref{thm:progressstableprocesses}, there exists $Q$
  s.t.~$P''\pp Q$ is \stable\ and well-typed, and $P''\pp Q\to R$ for
  some $R$. By Theorem \ref{thm:std}.2, this $R$ is also well-typed,
  so by Theorem~\ref{thm:stabilitypreservation}, $R$ is also \stable.
\end{proof}

\begin{remark}\label{selfdelegation}\rm
  In Section~\ref{sec:calculus}, we discussed that rule \Did{Del} can
  cause a deadlock with a processes of the form $\inputS k{k''}\pfx
  P\pp\outputS k{k'}\pfx Q$. However, this process is not transparent
  because its graph is not acyclic.
\end{remark}

Recall from Section \ref{sec:calculus} that a program is a process
with no free session channels and no non-trivial occurrences of session
channel restrictions.

\begin{corollary}
  \label{cor:program-progress}
  Well-typed programs have progress.
\end{corollary}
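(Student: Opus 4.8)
The plan is to derive the corollary from Theorem~\ref{thm:progress} by showing that every well-typed program is, in fact, \stable. Since Theorem~\ref{thm:progress} already establishes that well-typed \stable\ processes have progress, it suffices to verify that the hypotheses of that theorem are met by any well-typed program. The well-typedness is immediate from the definition of a program (a program is a process, and the statement presumes it is well-typed), so the entire content of the corollary reduces to the single claim that \emph{programs are \stable}.

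For this, I would appeal directly to Proposition~\ref{prop:programs-are-stable}, which asserts that for a well-typed program $P$ and any sub-process $Q$ of $P$, the graph $\graph{Q}$ has no edges. The key step is the observation that a graph with no edges is trivially acyclic. By the definition of \stable, $P$ is \stable\ precisely when every sub-process $Q$ of $P$ has $\graph{Q}$ acyclic. Since Proposition~\ref{prop:programs-are-stable} gives us the strictly stronger property that every such $\graph{Q}$ is edgeless, acyclicity follows at once for every sub-process, and hence $P$ is \stable.

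Having established that $P$ is \stable, I would then invoke Theorem~\ref{thm:progress} to conclude that $P$ has progress, completing the proof. The argument is therefore a short two-line composition: Proposition~\ref{prop:programs-are-stable} (together with the trivial fact that edgeless implies acyclic) shows programs are \stable, and Theorem~\ref{thm:progress} turns \stability\ into progress.

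There is essentially no hard part here; the corollary is designed to be an immediate consequence of the two preceding results, with all the genuine difficulty having been front-loaded into Proposition~\ref{prop:programs-are-stable} (the combinatorial typing argument showing $\Delta_R(k)\neq\bot$) and Theorem~\ref{thm:progress} (which in turn rests on the inhabitation lemma and the reduction-of-\stable-processes theorem). The only point requiring a moment's care is to confirm that the notion of ``sub-process'' used in the definition of \stable\ matches the one used in Proposition~\ref{prop:programs-are-stable}, but both refer to sub-terms up to structural congruence, so they coincide.
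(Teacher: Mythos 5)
Your proposal is correct and matches the paper's own proof exactly: both derive \stability\ of well-typed programs from Proposition~\ref{prop:programs-are-stable} (edgeless graphs at every sub-process, hence trivially acyclic) and then apply Theorem~\ref{thm:progress}. No gaps; the argument is the intended two-step composition.
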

\begin{proof}
  By Proposition \ref{prop:programs-are-stable}, every well-typed
  program is a \stable\ process. By the preceding Theorem, every
  \stable\ process has progress.
\end{proof}

\begin{example}\rm
  Example~\ref{ex1}, buyer-seller,  is a
  program, and so transparent, whence it has progress.
\end{example}

\smallbreak 

We conclude by remarking on the complexity of the implied program
analysis. If a program is well-typed, transparency and thus progress
comes \emph{for free} courtesy of Corollary
\ref{cor:program-progress}. It can be checked whether a process is a
well-typed program in $\mathcal O(n)$, where $n$ measures the number
of nodes in the abstract syntax of the process; even if we have to
also ensure that the process contains no restrictions.

The broader class of transparent processes has progress by Theorem
\ref{thm:progressstableprocesses}. The session-dependency graph of a
process is computable in $\mathcal O(cp)$ where $c$ is the number of
distinct service-channels, bound or free, and $p$ is the number of
top-level prefixes. Deciding acyclicity is linear in the
number of nodes and edges, thus in time $\mathcal O(p+c)\subseteq
\mathcal O(pc)$. Transparency requires this computation at every sub-term
of the graph; however, it is clearly sufficient to consider every
maximal parallel product sub-term (i.e., for $P\pp Q$, no need to
consider  $P$, $Q$ separately). Thus, letting $p_i$ be
the width of the $i$th maximal parallel sub-term (that is, $n$ for
$P_1\pp\cdots\pp P_n$), we can compute transparency in $\mathcal
O(\Sigma_i p_ic)\subseteq\mathcal O(nc)$. In summary:
  \begin{theorem}
  A well-typed process $P$ can be checked for transparency in
    time $\mathcal O(nc)$, where $n$ is the number of nodes in the
    abstract syntax of the process, and $c$ is the number of distinct
    live channels. 
  \end{theorem}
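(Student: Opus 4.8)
The plan is to turn the complexity accounting above into a three-step argument: cut down the family of sub-terms that must actually be inspected, bound the cost of inspecting a single one, and then sum.

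\textbf{Step 1: reduce the sub-terms to inspect.} By definition $P$ is transparent iff $\graph Q$ is acyclic for every sub-process $Q$, and a priori this ranges over the sub-terms of \emph{all} $P'\equiv P$---an unbounded family. I would first collapse this. By Lemmas~\ref{lem:56} and~\ref{lem:5600}, both the graph and transparency are stable under $\equiv$, so associativity and commutativity of $\pp$ never need to be unfolded: it suffices to fix a single representative (with restrictions extruded to the top) and inspect its genuine sub-terms. Reading off the clauses of Definition~\ref{def:sessiondependencygraph}, the graph of a prefix $\gamma$ is a single node with no edges, and $\graph{\INACT}$ is empty; both are trivially acyclic. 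The clause for $\new k P$ merely \emph{deletes} the label $k$, and deleting a label can only remove edges, never create one, so restriction cannot introduce a cycle not already present in its body. Hence the only sub-terms whose graphs can carry a cycle are parallel products. Finally, the graph of a non-maximal product is exactly the induced subgraph of the maximal product containing it, and an induced subgraph of an acyclic graph is acyclic; so it is enough to test acyclicity at each \emph{maximal} parallel-product sub-term $\gamma_1\pp\cdots\pp\gamma_{p_i}$.

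\textbf{Step 2: cost of one maximal product.} Fix such a product of width $p_i$. Its graph has $p_i$ nodes, the node of $\gamma_j$ carrying the label set $\fsc{\gamma_j}$; every such label is a free session channel and hence one of the $c$ live channels. Building the node--channel incidence by scanning the threads costs $\mathcal O(p_i c)$. By Lemma~\ref{lem:564} and well-typedness each session channel labels at most two nodes, so it induces at most one edge and $\graph{\gamma_1\pp\cdots\pp\gamma_{p_i}}$ has at most $c$ edges; a depth-first acyclicity test therefore runs in $\mathcal O(p_i+c)\subseteq\mathcal O(p_i c)$. Thus one maximal product is handled in $\mathcal O(p_i c)$.

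\textbf{Step 3: sum.} Every thread is the head of exactly one maximal parallel product and corresponds to a distinct prefix node of the abstract syntax of $P$, so $\sum_i p_i\le n$. Summing the per-product bound yields $\sum_i \mathcal O(p_i c)=\mathcal O\!\big(c\sum_i p_i\big)\subseteq\mathcal O(nc)$, as required.

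\textbf{Main obstacle.} The delicate step is the first: justifying that only maximal products need inspection, given the $\equiv$-closure hidden in the notion of sub-process and the label-erasing behaviour of restriction. The two observations that carry it---that deleting a label only removes edges, and that a sub-product sits as an induced subgraph inside its maximal product---are elementary, but pinning down that the scanned representative really realises the worst-case (maximally-labelled) graph at every product, after scope extrusion, is where the argument rests on Lemmas~\ref{lem:56} and~\ref{lem:5600} and where care is needed.
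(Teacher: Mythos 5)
Your proposal is correct and follows essentially the same route as the paper: reduce the check to the maximal parallel-product sub-terms (the paper asserts this sufficiency outright, so your induced-subgraph and restriction arguments merely fill in its ``clearly sufficient''), build each product's graph in $\mathcal O(p_i c)$, test acyclicity in $\mathcal O(p_i + c)$ using the Lemma~\ref{lem:564} edge bound, and sum $\sum_i p_i \le n$ to obtain $\mathcal O(nc)$. One cosmetic point: by Definition~\ref{def:sessiondependencygraph} the clause for $\new k P$ keeps the edge set of $\graph P$ verbatim and only erases the label $k$, so restriction preserves acyclicity even more directly than your phrase ``deleting a label can only remove edges'' suggests.
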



\section{Conclusions}
\label{sec:conclusions}
We have provided a simple and {\em efficient} static analysis for
guaranteeing progress for web services based on session types.  The
advantage of our approach is that standard session typing (with the
restriction on the typing of services) is enough for guaranteeing
progress of programs without any further analysis of processes.  Our
result is based on the development of a technique which relies on
session interdependency graphs. In particular, we have shown that
transparent processes, those processes with an acyclic session
interdependency graph, have progress based on the fact that
transparency is preserved by the reduction semantics and it guarantees
that live channels eventually react.

\smallskip

The main limitation of this work is the lack of service channel
restriction. The main challenge with introducing such a syntactic
construct is in the definition of progress. In fact, processes such as
$\new a \big(\initOut a{k'}\pfx\inputP kx\ \pp\ \repInitIn a {k'}
\big)$ should satisfy the progress property. However, our current
definition would address the sub-process $\initOut a{k'}\pfx\inputP
kx$ which, taking restriction into account, has no progress. We leave
this issue as future work conjecturing that, under some small
assumptions, transparent processes with service restriction also have
progress.
Additionally, we plan to address two further points. Firstly, the
graph representation of session interdependency has proved to be a
very useful tool for investigating the properties of a system. We
believe that this approach can lead to further results that can go
beyond applications to progress e.g. secure data flow. Secondly, the
work in \cite{BCDDDY:concur2008} provides a typing system for
guaranteeing progress in multiparty sessions. A natural question is to
investigate whether the techniques used in this work can be reused in
the multiparty session setting with similar results. We are optimistic
that this might be the case.

\smallskip

\NI {\bf Acknowledgments.} We gratefully acknowledge helpful
discussions with N. Yoshida, M. Dezani, K.  Honda, and the anonymous referees.


\bibliographystyle{eptcs} 
\bibliography{session}

\ifapp
\newpage
\appendix
\input{appendix}
\fi

\end{document}